\definecolor{darkergreen}{rgb}{0.0, 0.5, 0.0}
\definecolor{Blue}{RGB}{0,0,148}
\definecolor{keywordcolor}{rgb}{0.7, 0.1, 0.1}   
\definecolor{commentcolor}{rgb}{0.4, 0.4, 0.4}   
\definecolor{symbolcolor}{rgb}{0, 0, 0.8}    
\definecolor{tacticcolor}{rgb}{0, 0, 0.8}    
\definecolor{sortcolor}{rgb}{0.1, 0.5, 0.1}      
\theoremstyle{plain}
\theoremstyle{definition}
\DeclareMathOperator{\Spec}{Spec}
\newcommand*{\mathlib}{\textsc{mathlib}\xspace} 
\definecolor{ghpurple}{rgb}{0.4, 0.22, 0.73}
\newcommand{\changeurlcolor}[1]{\hypersetup{urlcolor=#1}}   
\newcommand{\lsthref}[2]{\changeurlcolor{ghpurple}{\ttfamily\href{#1}{#2}}}
\title[Formalising the local compactness of the adele ring]{Formalising the local compactness of the adele ring} 
\author[Salvatore Mercuri]{Salvatore Mercuri}
\keywords{algebraic number theory, adele ring, Lean, mathlib}
\begin{abstract}
The adele ring of a number field is a central object in modern number theory. 
Its status as a locally compact topological ring is one of the key reasons why. 
We describe a formal proof that the adele ring of a number field is locally compact implemented in the Lean 4 theorem prover. 
Our work includes the formalisations of new types, including the completion of a number field at an infinite place, the infinite adele ring and the finite $S$-adele ring, as well as formal proofs that completions of a number field are locally compact and that their rings of integers at finite places are compact. 
\end{abstract}
\begin{document}

\section{Introduction}
\label{sec:intro}

In number theory, looking locally can give a lot of information about the global picture with which we are primarily concerned. 
For example, the equation $x^2 + 3y^2 = 2024$ has no rational solution, because any such solution would give a local solution modulo $3$ of the equation $x^2 \equiv 2 \pmod{3}$, yet $2$ is not a square modulo $3$. 
More generally, Hasse's \emph{local-global principle} asserts that, under certain conditions, a global rational solution of an equation exists if and only if there is a real solution and local solutions for every prime $p$. 

The local study of the field $\mathbb{Q}$ of rational numbers at a prime $p$ is systematised in the field $\mathbb{Q}_p$ of $p$-adic numbers. 
Just as the real numbers $\mathbb{R}$ are the completion of $\mathbb{Q}$ with respect to the usual absolute value, the field $\mathbb{Q}_p$ is the completion of $\mathbb{Q}$ with respect to the $p$-adic absolute value. 
In line with the local-global principle, we might intuitively unify all real and local information into the product $\mathbb{R} \times \prod_p \mathbb{Q}_p$, where $p$ ranges over all primes. 
The rational adele ring $\mathbb{A}_{\mathbb{Q}}$ is a subring of this product defined by a restriction on the infinite product over $p$ which we detail later. 
This particular restriction makes the adele ring a \emph{locally compact} topological ring. 

In addition to encoding local information, the adele ring's local compactness has led to its widespread use in modern number theory. 
One particular example is the ability to do Fourier analysis on the adele ring, which Tate \cite{tate50} used to reformulate Dirichlet $L$-functions and their functional equations on adeles. 
These functional equations have normalisation factors which, in the classical setting, appeared artificially but in the adelic setting appear naturally from the real component of the rational adele ring. 
Tate's thesis is really the theory of automorphic forms on $GL(1)$ and a precursor to the Langlands program where the adele ring analogously plays a foundational role.

Adele rings of general number fields can be defined similarly to the rational adele ring. 
In this paper, we formalise the proof that the adele ring of a number field is a locally compact topological ring in the Lean theorem prover, \lstinline{v4.10.0}, with the use of Lean's mathematics library \mathlib \cite{mathlib} at commit \href{https://github.com/leanprover-community/mathlib4/tree/caac5b13fb72ba0c5d0b35a0067de108db65e964}{\small\texttt{caac5b1}}. 
All hyperlinks to \mathlib in this paper link to this commit; objects and results introduced after this commit are referenced via their introductory PRs.
The reference project code is available on GitHub\footnote{\href{https://github.com/smmercuri/adele-ring_locally-compact/tree/journal}{https://github.com/smmercuri/adele-ring\_locally-compact/tree/journal}}.
To our knowledge, this is the first formalised proof of the local compactness of the adele ring in any interactive theorem prover, marking a milestone along the long-term path of formalising Fermat's Last Theorem, Tate's thesis and the Langlands program within \mathlib. 

Prior work \cite{defrutosfernandez22} formalised the adele ring of a global field in Lean.
Developments in \mathlib since then enable us to formalise the completion of a number field at its infinite places, which we use to give a new formalisation of the infinite adele ring of a number field.
With the help of recent work in local fields, \cite{dFFN24}, we also provide important and necessary topological results, such as the proof that the local ring of integers is compact. 
We note that our work depends on the \href{https://github.com/mariainesdff/LocalClassFieldTheory/blob/9b1150821ebe8a48ac8f5d5ac6b88e0028a8e427/LocalClassFieldTheory/DiscreteValuationRing/ResidueField.lean#L94-L103}{result}, from \cite{dFFN24}, that the $v$-adic ring of integers $\mathcal{O}_v$ for number fields has a finite residue field.
At the time of writing, this result is being upstreamed to \mathlib\footnote{\href{https://github.com/leanprover-community/mathlib4/pull/26537}{https://github.com/leanprover-community/mathlib4/pull/26537}}\textsuperscript{,}\footnote{\href{https://github.com/leanprover-community/mathlib4/pull/26538}{https://github.com/leanprover-community/mathlib4/pull/26538}}.
To avoid duplicating work, we assume this result in our project and leave the proof as a \lstinline{sorry}.

We provide some mathematical preliminaries in Section~\ref{sec:math}, involving the definitions of absolute values, discrete valuations, number fields, and locally compact topological spaces, as well as completions of uniform spaces. 
In Section~\ref{sec:abs_induced}, we discuss families of induced absolute values on a field and how we handle multiple topological instances coming from them in Lean.
Section~\ref{sec:adele_sec} concerns the formalisation of the adele ring of a number field; in particular, the new formalisations of the completion of a number field at an infinite place and the infinite adele ring are detailed in Section~\ref{sec:places}.
We then describe the informal and formal proof that the adele ring is locally compact, along with other topological results, in Section~\ref{sec:locallycompact}. 
Section~\ref{sec:discussion} concludes with a discussion around implementation details.

\section{Preliminaries}
\label{sec:math}

\subsection{Absolute values and discrete valuations}
\label{sec:valuations}

While absolute values and discrete valuations are often unified into a single treatment, we separate them here due to their distinction within \mathlib.
See \cite[Chapter 2]{CF67} and \cite[Chapter 1]{serre79} for details on the theory of absolute values and discrete valuations respectively, and \cite[Section 2]{dFFN24} for details of integer-valued multiplicative valuations in \mathlib.

\begin{definition}[\href{https://github.com/leanprover-community/mathlib4/blob/caac5b13fb72ba0c5d0b35a0067de108db65e964/Mathlib/Algebra/Order/AbsoluteValue.lean\#L31-L37}{\small\texttt{AbsoluteValue}}]\label{def:abs} 
Let $K$ be a field. We say that the function $|\cdot| : K \to \mathbb{R}_{\ge 0}$ is an \emph{absolute value} on $K$ if, for all $x, y \in K$, we have:
\begin{enumerate}
	\item $|x| = 0$ if and only if $x = 0$;
	\item $|xy| = |x||y|$;
	\item $|x + y| \le |x| + |y|$ (the \emph{triangle inequality}).
\end{enumerate}
If the \emph{ultrametric inequality},
\begin{enumerate}[start=4]
	\item $|x + y| \le \max(|x|, |y|)$,
\end{enumerate}
holds over the triangle inequality, the absolute value is \emph{non-Archimedean}; else, it is \emph{Archimedean}.
\end{definition}

Discrete \emph{additive} valuations on $K$ are a subclass of group homomorphisms $K^{\times}\to(\mathbb{Z}, +)$ that are extended to $K$ as follows.

\begin{definition}[\href{https://github.com/leanprover-community/mathlib4/blob/caac5b13fb72ba0c5d0b35a0067de108db65e964/Mathlib/RingTheory/Valuation/Basic.lean\#L580-L581}{\small\texttt{AddValuation}}]\label{def:discreteaddval} 
The function $a : K \to \mathbb{Z}\cup \{\infty\}$ is a \emph{discrete additive valuation} on $K$ if, for all $x, y \in K$, we have:
\begin{enumerate}
	\item $a(x) = \infty$ if and only if $x = 0$;
	\item $a(xy) = a(x) + a(y)$;
	\item $a(x + y) \ge \min(a(x), a(y))$.
\end{enumerate}
\end{definition}

\begin{example} If $K = \mathbb{Q}$, then the additive $p$-adic valuation $a_p$ (defined on an integer as the largest power of $p$ dividing that integer; extended to $\mathbb{Q}$ by $a_p(a/b) := a_p(a) - a_p(b)$) is an example of a discrete additive valuation. 
\end{example}
We can construct a non-Archimedean absolute value from a discrete additive valuation $a$ by defining $|x|_a = c^{-a(x)}$ if $x \ne 0$ and $|0|_a = 0$, for some chosen real number $c > 1$. 
The $p$-adic absolute value $|x|_p = p^{-a_p(x)}$ is an example of this construction when $K = \mathbb{Q}$. 

Discrete valuations in the literature are typically defined as additive valuations.  
In \mathlib, discrete \emph{multiplicative} valuations have a more developed API than their additive counterparts.
These are abstractions of the absolute values arising from discrete additive valuations that do not require a choice of the base $c$. 
In \mathlib, any abstract additive structure can be turned into a corresponding abstract multiplicative structure, which is in bijection with the original additive type via the inverse morphisms \href{https://github.com/leanprover-community/mathlib4/blob/caac5b13fb72ba0c5d0b35a0067de108db65e964/Mathlib/Algebra/Group/TypeTags.lean\#L81-L82}{\small\texttt{ofAdd}} and \href{https://github.com/leanprover-community/mathlib4/blob/caac5b13fb72ba0c5d0b35a0067de108db65e964/Mathlib/Algebra/Group/TypeTags.lean\#L85}{\small\texttt{toAdd}}. 
For example, the additive group of integers $(\mathbb{Z}, +)$ bijects with the type $\mathbb{Z}_m$, which can be thought of as the multiplicative group $\{x^n \mid n \in \mathbb{Z}\}$, where $x$ is an abstract symbol, as:
\begin{align*}
	\text{\lstinline{ofAdd}} &: (\mathbb{Z}, +) \to(\mathbb{Z}_m, \times); \\
	\text{\lstinline{toAdd}} &: (\mathbb{Z}_m, \times)\to (\mathbb{Z}, +).
\end{align*}
In particular, these maps preserve respective units, generators and preorders, so $\text{\lstinline{ofAdd}}(0) = 1$, $\text{\lstinline{ofAdd}}(1)$ generates $(\mathbb{Z}_m, \times)$, and $\text{\lstinline{ofAdd}}(x) \le \text{\lstinline{ofAdd}}(y)$ if and only if $x \le y$. 
The map \lstinline{ofAdd} can be thought of as $n\mapsto x^n$, where $x$ is a symbol.
To transfer infinite additive values to the multiplicative setting we add a zero term to obtain the type $\mathbb{Z}_{m0}$, which extends the multiplicative structure and the preorder of $\mathbb{Z}_m$. 

\begin{definition}[\href{https://github.com/leanprover-community/mathlib4/blob/caac5b13fb72ba0c5d0b35a0067de108db65e964/Mathlib/RingTheory/Valuation/Basic.lean\#L77-L79}{\small\texttt{Valuation}}]\label{def:discretemulval} 
The function $v : K \to \mathbb{Z}_{m0}$ is a \emph{discrete multiplicative valuation} on $K$ if, for all $x, y \in K$, we have:
\begin{enumerate}
	\item $v(0) = 0$;
	\item $v(1) = 1$;
	\item $v(xy) = v(x)v(y)$;
	\item $v(x + y) \le \max(v(x), v(y))$.
\end{enumerate}
\end{definition}

We emphasise that a discrete multiplicative valuation is not the same as the absolute value $|\cdot|_a$ constructed from a discrete additive valuation $a$. 
The former only takes values in $\mathbb{Z}_{m0}$, while the latter does not; the latter depends on a choice of base $c$, while the former does not. 
However, each additive valuation $a$ also gives rise to a multiplicative valuation $v$ by defining $v(x) = \text{\lstinline{ofAdd}}(-a(x))$ for $x \ne 0$ and $v(0) = 0$.

\begin{example} 
If $K = \mathbb{Q}$, then the $p$-adic valuation $v_p$ is defined by $v_p(x) = \text{\lstinline{ofAdd}}(-a_p(x))$ if $x \ne 0$. 
We see that $x$ is a $p$-adic unit if and only if $v_p(x) = \text{\lstinline{ofAdd}}(0) = 1$, and $v_p(x)\le 1$ if and only if $a_p(x) \ge 0$.  
\end{example}

\begin{definition}\label{def:absequiv} 
Two absolute values (resp. discrete multiplicative valuations) $|\cdot|_1$ and $|\cdot|_2$ (resp. $v_1$ and $v_2$) on $K$ are \emph{equivalent} if $|\cdot|_1 = |\cdot|_2^{\alpha}$ (resp. $v_1 = v_2^{\alpha}$) for some $\alpha \in \mathbb{R}_{>0}$. 
\end{definition}
Equivalent valuations define the same uniform structures on $K$ and lead to the same completions of $K$, so we typically care more about equivalence classes of valuations.

\begin{definition}\label{def:place} 
An \emph{infinite place} of a field $K$ is an equivalence class of Archimedean absolute values on $K$ \emph{(\href{https://github.com/leanprover-community/mathlib4/blob/caac5b13fb72ba0c5d0b35a0067de108db65e964/Mathlib/NumberTheory/NumberField/Embeddings.lean\#L246}{\small\texttt{NumberField.InfinitePlace}})}. 
A \emph{finite place} of $K$ is an equivalence class of discrete multiplicative valuations on $K$.
\end{definition}

\begin{remark} 
Note that both the predicate that two absolute values are equivalent and the type of finite places of a number field have since been included in \mathlib\footnote{\href{https://github.com/leanprover-community/mathlib4/pull/20362}{https://github.com/leanprover-community/mathlib4/pull/20362}}\textsuperscript{,}\footnote{\href{https://github.com/leanprover-community/mathlib4/pull/19667}{https://github.com/leanprover-community/mathlib4/pull/19667}}.
Instead, we worked directly with valuations coming from distinct non-zero prime ideals.
\end{remark}

\subsection{Number fields}

A number field $K$ is a finite-degree field extension of $\mathbb{Q}$, and its ring of integers $\mathcal{O}_K$ is the integral closure of $\mathbb{Z}$ inside $K$. 
If $\mathcal{O}_K$ is not a unique factorisation domain, then the fundamental theorem of arithmetic does not hold at the level of elements, however it does hold at the level of ideals. 
That is, any non-zero ideal of $\mathcal{O}_K$ can be written as a finite product of non-zero prime ideals, unique up to reordering of the prime ideals.

\begin{definition}[$\mathfrak{p}$-adic valuation, \href{https://github.com/leanprover-community/mathlib4/blob/caac5b13fb72ba0c5d0b35a0067de108db65e964/Mathlib/RingTheory/DedekindDomain/AdicValuation.lean\#L296-L298}{\small\texttt{IsDedekindDomain.HeightOneSpectrum.valuation}}]\label{def:Kpadicval} 
The \emph{additive $\mathfrak{p}$-adic  valuation} $a_{\mathfrak{p}}(x)$ on a non-zero prime ideal $\mathfrak{p}$ of $\mathcal{O}_K$ is the largest power of $\mathfrak{p}$ appearing in the factorisation of $x\mathcal{O}_K$ into prime ideals. 
This extends to a discrete additive valuation on $K$ by defining $a_{\mathfrak{p}}(a/b) = a_{\mathfrak{p}}(a) - a_{\mathfrak{p}}(b)$ and $a_{\mathfrak{p}}(0) = \infty$. 
The \emph{$\mathfrak{p}$-adic valuation} $v_{\mathfrak{p}}$ is the multiplicative valuation defined as $v_{\mathfrak{p}}(x) = \text{\emph{\lstinline{ofAdd}}}(-a_{\mathfrak{p}}(x))$ if $x \ne 0$ and $v_{\mathfrak{p}}(0) = 0$.
\end{definition}

\subsection{Uniform spaces and their completions}
\label{sec:uniform}

While topological spaces abstract the notion of continuity, \emph{uniform spaces}, first introduced by \cite{weil1938}, provide the appropriate abstraction of \emph{uniform} continuity.
We describe some brief relevant theory here; for more information, we refer the reader to \cite[Sections~2,~5]{BCM20}.
A uniform space $(X, \mathcal{U})$ consists of a set $X$ and a \emph{uniform structure} $\mathcal{U}$, which is a filter on $X \times X$ satisfying certain properties. 
Uniform continuity of a map $f : (X, \mathcal{U}_X) \to (Y, \mathcal{U}_Y)$ between uniform spaces is understood via the condition that every set in $\mathcal{U}_X$ is also in $(f \times f)^*\mathcal{U}_Y$, where $(f \times f)^*\mathcal{U}_Y$ is the \emph{pullback} filter generated by preimages of elements in $\mathcal{U}_Y$ under $f\times f$.

Uniform spaces come with a powerful completion operation, which makes use of the unified notion of limits that filters represent.
Much like metric spaces can be completed by adding limits of Cauchy sequences, there is an operation sending a uniform space $(X, \mathcal{U})$ to a complete separated uniform space $(\widehat{X}, \widehat{\mathcal{U}})$ by adding limits of \emph{Cauchy filters}, \cite{BCM20}. 
This completion operator satisfies the universal property of \cite[Theorem II.3.7.3, Definition II.3.7.4]{bourbaki66} (see also \cite[Theorem 5.1]{BCM20}), hence it defines a functor from the category of all uniform spaces to the category of complete separated uniform spaces. 
The universal property of the uniform space completion functor means that it represents an equivalence class of \emph{abstract completion operators} as follows.

\begin{definition}[\cite{BCM20}, \href{https://github.com/leanprover-community/mathlib4/blob/caac5b13fb72ba0c5d0b35a0067de108db65e964/Mathlib/Topology/UniformSpace/AbstractCompletion.lean\#L57-L71}{\small\texttt{AbstractCompletion}}]\label{def:abstractcompl} We say that $(Y, \iota : X\to Y)$ is an \emph{abstract completion} of the uniform space $X$ if $Y$ is a complete separated uniform space, $\iota$ has dense image and $\iota$ is uniformly continuous.
\end{definition}

\begin{theorem}[\cite{BCM20}, \href{https://github.com/leanprover-community/mathlib4/blob/caac5b13fb72ba0c5d0b35a0067de108db65e964/Mathlib/Topology/UniformSpace/AbstractCompletion.lean\#L239-L245}{\small\texttt{AbstractCompletion.compareEquiv}}]\label{thm:abstractcompl_iso} Any two abstract completions $(Y_i, \iota_i: X\to Y_i)$, for $i = 1, 2$, of a uniform space $X$ are isomorphic as uniform spaces. 
\end{theorem}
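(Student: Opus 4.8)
The plan is to derive the isomorphism from the \emph{universal extension property} that an abstract completion enjoys: if $(Y,\iota)$ is an abstract completion of $X$ and $f\colon X\to Z$ is uniformly continuous into a complete separated uniform space $Z$, then there is a \emph{unique} uniformly continuous $\bar f\colon Y\to Z$ with $\bar f\circ\iota=f$. This is the abstract counterpart of the universal property of the canonical completion $\widehat X$ recalled above, and in \mathlib it is provided by \lstinline{AbstractCompletion.extend} together with its associated lemmas. Its proof is where the substance lies: for $y\in Y$ one transports the neighbourhood filter of $y$, along the dense uniform embedding $\iota$, to a Cauchy filter on $X$, pushes it forward by $f$ to a Cauchy filter on $Z$, and defines $\bar f(y)$ to be its limit, which is unique because $Z$ is separated; a short entourage estimate yields uniform continuity, agreement with $f$ on $\iota(X)$ is by continuity, and uniqueness follows because two continuous maps into a separated space agreeing on the dense set $\iota(X)$ coincide.

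Granting this, I would apply the extension property twice. Extending $\iota_2\colon X\to Y_2$ --- uniformly continuous into the complete separated space $Y_2$ --- along the abstract completion $(Y_1,\iota_1)$ produces a uniformly continuous map $f\colon Y_1\to Y_2$ with $f\circ\iota_1=\iota_2$; symmetrically, extending $\iota_1$ along $(Y_2,\iota_2)$ produces a uniformly continuous $g\colon Y_2\to Y_1$ with $g\circ\iota_2=\iota_1$. These are the two comparison maps.

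It then remains to see that $g$ and $f$ are mutually inverse. The composite $g\circ f\colon Y_1\to Y_1$ is uniformly continuous and satisfies $(g\circ f)\circ\iota_1=g\circ\iota_2=\iota_1$, so it is an extension of $\iota_1$ along $(Y_1,\iota_1)$; but $\mathrm{id}_{Y_1}$ is also such an extension and $Y_1$ is complete and separated, so the uniqueness clause forces $g\circ f=\mathrm{id}_{Y_1}$. The same argument gives $f\circ g=\mathrm{id}_{Y_2}$. Hence $f$ is a bijection with uniformly continuous inverse $g$, i.e.\ an isomorphism of uniform spaces; bundling $f$, $g$, and the two identities yields the required \lstinline{UniformEquiv} (this is precisely \lstinline{AbstractCompletion.compareEquiv}).

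I expect the only genuine obstacle to sit inside the extension lemma rather than in the surrounding diagram chase: one must produce an extension that is \emph{uniformly} continuous, not merely continuous, which is exactly the step that uses the uniform-inducing (dense uniform embedding) data carried by $\iota$ in the definition of an abstract completion. In the formalisation this is largely absorbed by existing \mathlib infrastructure around \lstinline{AbstractCompletion.extend}, so the remaining work reduces to the bookkeeping above; informally it is the routine, if slightly delicate, Cauchy-filter transport argument.
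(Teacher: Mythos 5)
Your argument is correct and is exactly the standard one: extend each $\iota_i$ along the other completion via the universal extension property, then use uniqueness of extensions into a complete separated space agreeing on a dense image to see the two comparison maps are mutually inverse. The paper itself gives no proof of this statement --- it is quoted as a known preliminary from \cite{BCM20} and \mathlib's \lstinline{AbstractCompletion.compareEquiv} --- and your proposal reproduces precisely the argument those sources use.
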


These properties make the uniform space completion operation ideal for formalising the completions of a range of objects, including the number fields that we consider within this paper. 
This theory is contained in \mathlib within the \href{https://github.com/leanprover-community/mathlib4/blob/caac5b13fb72ba0c5d0b35a0067de108db65e964/Mathlib/Topology/UniformSpace/Completion.lean\#L293}{\small\texttt{UniformSpace.Completion}} API, much of which is inherited from the API of the \lstinline{AbstractCompletion} structure. 

A field $K$ can have various sources of uniform structures.
An absolute value $|\cdot|$ on $K$ determines a uniform structure through the filter generated by the sets $\{(x, y)\mid |x - y| <\varepsilon\}$, where $\varepsilon > 0$.
Discrete multiplicative valuations $v$ similarly define uniform spaces $(K, \mathcal{U}_v)$.
These uniform structures determine the completion $K_v$ of a field at each place $v$ via the uniform space completion functor.

\begin{example}
\begin{enumerate}
	\item The uniform structure $\mathcal{U}_{\infty}$ coming from the usual absolute value $|\cdot|_{\infty}$ on $\mathbb{Q}$ defines a uniform space $(\mathbb{Q}, \mathcal{U}_{\infty})$, whose completion is the field $\mathbb{R}$ of real numbers.
	\item The uniform structure $\mathcal{U}_{v_p}$ coming from the multiplicative $p$-adic valuation defines a uniform space $(\mathbb{Q}, \mathcal{U}_{v_p})$, whose completion is the field $\mathbb{Q}_p$ of $p$-adic numbers. The subring $\mathbb{Z}$ of $\mathbb{Q}$ completes to the $p$-adic ring of integers $\mathbb{Z}_p := \{x \in \mathbb{Q}_p \mid v_p(x) \le 1\}$.
\end{enumerate}
\end{example}

\subsection{Locally compact spaces} 

There are many characterisations of a topological space being locally compact. 
The following corresponds to the implementation found in \mathlib.

\begin{definition}[\href{https://github.com/leanprover-community/mathlib4/blob/caac5b13fb72ba0c5d0b35a0067de108db65e964/Mathlib/Topology/Defs/Filter.lean\#L286-L289}{\small\texttt{LocallyCompactSpace}}]\label{def:lc} A topological space $X$ is \emph{locally compact} if, for each neighbourhood $N$ of each $x\in X$, there exists a compact neighbourhood $S \subseteq N$ of $x$.
\end{definition}
It is well known that closed subspaces of locally compact spaces, finite products of locally compact spaces, and infinite products of compact spaces are all locally compact. 

\section{Induced absolute values and their pullback uniform structures}
\label{sec:abs_induced}

Throughout Section~\ref{sec:abs_induced} the following variables are in scope.
\begin{lean}
variable {K L : Type*} [Field K] [NormedField L] (v : AbsoluteValue K ℝ)
\end{lean}

\subsection{Handling multiple instances on a type using dependent type synonyms}
\label{sec:withabs}

As described in Section~\ref{sec:math}, there may be multiple distinct \lstinline{UniformSpace} instances on a number field coming from its infinite places. 
However, Lean's type class inference system cannot automatically resolve multiple instances of the same class on a single type. 
If we assign multiple \lstinline{UniformSpace} instances directly to the type \lstinline{K}, we will then be forced to manually specify the desired instance through the use of \lstinline{@} or local \lstinline{let} declarations.

An alternative approach to handling this issue is through the use of type synonyms. 
Type synonyms simply rename a type, however they also allow us to insert dependencies. 
This provides a mechanism through which the type class inference system is able to resolve multiple dependent instances on a type.
For a general semiring \lstinline{R} and ordered semiring \lstinline{S}, we define the type synonym \lsthref{https://github.com/smmercuri/adele-ring_locally-compact/blob/d47637a02a84cf03d6488a4780c4d0399b04278a/AdeleRingLocallyCompact/NumberTheory/NumberField/Completion.lean\#L79}{\small\texttt{WithAbs}} of \lstinline{R}, which depends on an \lstinline{S}-valued absolute value on \lstinline{R}.

\begin{lean}
def (*\lsthref{https://github.com/smmercuri/adele-ring_locally-compact/blob/d47637a02a84cf03d6488a4780c4d0399b04278a/AdeleRingLocallyCompact/NumberTheory/NumberField/Completion.lean\#L79}{WithAbs}*) {R S} [Semiring R] [OrderedSemiring S] : AbsoluteValue R S → Type _ := 
  fun _ => R
\end{lean}
In particular, given a \emph{real} absolute value \lstinline{v} on a field \lstinline{K}, we assign a \lstinline{NormedField} instance to \lstinline{K} by assigning it to its type synonym \lstinline{WithAbs v}.
\begin{lean}
instance (*\lsthref{https://github.com/smmercuri/adele-ring_locally-compact/blob/d47637a02a84cf03d6488a4780c4d0399b04278a/AdeleRingLocallyCompact/NumberTheory/NumberField/Completion.lean\#L97-L100}{normedField}*) : NormedField (WithAbs v) where ...
\end{lean}
Now there is only a \emph{single} \lstinline{NormedField} instance on each \lstinline{WithAbs v}, enabling the type class inference system to automatically resolve the relevant \lstinline{NormedField} and its parent \lstinline{UniformSpace} instance on \lstinline{K} coming from the absolute value \lstinline{v}.
The completion of a field with respect to an absolute value can then be given by applying \lstinline{UniformSpace.Completion}.

\begin{lean}
abbrev (*\lsthref{https://github.com/smmercuri/adele-ring_locally-compact/blob/d47637a02a84cf03d6488a4780c4d0399b04278a/AdeleRingLocallyCompact/NumberTheory/NumberField/Completion.lean\#L139}{AbsoluteValue.Completion}*) := UniformSpace.Completion (WithAbs v)
\end{lean}

See Section~\ref{sec:multiple_instances} for a comparison of the dependent type synonym approach to other approaches for handling the multiple instance problem. 

\subsection{Induced absolute values} 

A family of uniform structures on a field $K$ can be generated through field embeddings $K \hookrightarrow L$ of $K$ into a field $L$, where $L$ has an absolute value $|\cdot|_L : L \to \mathbb{R}$ as follows.

\begin{definition}\label{def:induced_abs} 
Let $K$ and $L$ be fields and let $|\cdot|_L : L \to\mathbb{R}$ be an absolute value on $L$. 
If $\sigma : K \hookrightarrow L$ is a field embedding, then the $\sigma$\emph{-induced} absolute value $|\cdot|_{\sigma} : K \to \mathbb{R}$ is defined by $|x|_{\sigma} = |\sigma(x)|_L$ for all $x \in K$.
\end{definition}
The uniform structure $\mathcal{U}_{\sigma}$ on $K$ given by a $(\sigma : K \hookrightarrow L)$-induced absolute value $|\cdot|_{\sigma}$ is precisely the pullback, $(\sigma \times \sigma)^*\mathcal{U}_L$, of the uniform structure $\mathcal{U}_L$ given by the absolute value $|\cdot|_L$. 
In \mathlib, this property is encoded in the \href{https://github.com/leanprover-community/mathlib4/blob/caac5b13fb72ba0c5d0b35a0067de108db65e964/Mathlib/Topology/UniformSpace/UniformEmbedding.lean#L32-L35}{\small\texttt{UniformInducing}} definition.

\begin{lean}
theorem (*\lsthref{https://github.com/smmercuri/adele-ring_locally-compact/blob/d47637a02a84cf03d6488a4780c4d0399b04278a/AdeleRingLocallyCompact/NumberTheory/NumberField/Completion.lean\#L127-L128}{WithAbs.uniformInducing\_of\_comp}*) {f : WithAbs v →+* L}
    (h : ∀ x, ‖f x‖ = v x) : UniformInducing f := ...
\end{lean}

\subsection{Pullbacks preserve the completable topological field property}
\label{sec:completabletopfield}
In general, the completion $\widehat{K}$ of a separated uniform space $(K, \mathcal{U})$ may not be a field, even if $K$ is a field. 
The necessary and sufficient condition required to guarantee this is that $(K, \mathcal{U})$ is a \emph{completable topological field}, given in \cite[Theorem III.6.8.7]{bourbaki66} and also \cite[Theorem 5.4]{BCM20}, which requires the inverse map $x \mapsto x^{-1}$ on $K$ to preserve the Cauchy property of filters which do not have a cluster point at zero.
The type class \href{https://github.com/leanprover-community/mathlib4/blob/caac5b13fb72ba0c5d0b35a0067de108db65e964/Mathlib/Topology/Algebra/UniformField.lean\#L52-L53}{\small\texttt{CompletableTopField}} encodes this property within \mathlib.

Let $(L, \mathcal{U}_L)$ be a completable topological field.
The pullback $(\sigma \times\sigma)^*\mathcal{U}_L$ uniform structure on $K$ under a field embedding $\sigma : K \hookrightarrow L$ defines a completable topological field.
\begin{lean}
theorem (*\lsthref{https://github.com/smmercuri/adele-ring_locally-compact/blob/d47637a02a84cf03d6488a4780c4d0399b04278a/AdeleRingLocallyCompact/Topology/UniformSpace/Basic.lean\#L26-L34}{UniformInducing.completableTopField}*) {K L : Type*} [Field L] 
    [UniformSpace L] [CompletableTopField L] [Field K] [UniformSpace K] 
    [T0Space K] {f : K →+* L}  (hf : UniformInducing f) :
    CompletableTopField K := ...
\end{lean}
Since the uniform structure of a ($\sigma : K \hookrightarrow L)$-induced absolute value of a field $K$ coincides with the pullback $(\sigma \times\sigma)^*\mathcal{U}_L$, we see that $(K, \mathcal{U}_{\sigma})$ is a completable topological field.
The completion of $K$ with respect to a $\sigma$-induced absolute value is therefore a field.
 
\section{The adele ring of a number field}
\label{sec:adele_sec}

Ostrowski's theorem \cite{ostrowski16} states that the only infinite place of $\mathbb{Q}$ is represented by the usual absolute value $|\cdot|_{\infty}$ and all the finite places are represented by the $p$-adic valuations $v_p$.
This result has been formalised and is part of a later version of \mathlib\footnote{\href{https://github.com/leanprover-community/mathlib4/pull/17138}{https://github.com/leanprover-community/mathlib4/pull/17138}}.
The \emph{infinite adele ring} $\mathbb{A}_{\mathbb{Q}, \infty}$ of $\mathbb{Q}$ is the completion $\mathbb{R}$ of $\mathbb{Q}$ at the single infinite place, and the \emph{finite adele ring} $\mathbb{A}_{\mathbb{Q}, f}$ is the restricted product of the completions of $\mathbb{Q}$ over primes $p$,
\begin{align}\label{eq:Qfiniteadelering}
	\mathbb{A}_{\mathbb{Q}, f} := \left\{(x_p)_p \in \prod_p \mathbb{Q}_p \mathrel{\bigg|} x_p \in \mathbb{Z}_p\ \text{for all but finitely many $p$}\right\}.
\end{align}
The \emph{adele ring} of $\mathbb{Q}$ is the product of the infinite and finite adele rings, $\mathbb{A}_{\mathbb{Q}} := \mathbb{A}_{\mathbb{Q}, \infty} \times \mathbb{A}_{\mathbb{Q}, f}$. 
The story is similar for general number fields, with the infinite places coming from the complex absolute value $|\cdot|_{\mathbb{C}}$ and the finite places being represented by $\mathfrak{p}$-adic valuations of non-zero prime ideals $\mathfrak{p}$ of $\mathcal{O}_K$. 

\subsection{The infinite adele ring of a number field}
\label{sec:places}

Throughout Section~\ref{sec:places} the following variables are in scope.
\begin{lean}
variable (K : Type*) [Field K] [NumberField K] (v : NumberField.InfinitePlace K)
\end{lean}

\subsubsection{Infinite places of a number field}
\label{sec:infiniteplaces}

There is only a single infinite place of $\mathbb{Q}$ because there is only a single field embedding of $\mathbb{Q}$ into $\mathbb{C}$.
In general, number fields $K$ have a finite number of distinct embeddings. 
For example, the embeddings of $K = \mathbb{Q}(\alpha)$ correspond to the Galois action permuting the roots of the minimal polynomial of $\alpha$ over $\mathbb{Q}$. 
Distinct embeddings $\sigma : K \hookrightarrow \mathbb{C}$ generate non-equivalent Archimedean absolute values on $K$ via the $\sigma$-induced absolute values $|x|_{\sigma} := |\sigma(x)|_{\mathbb{C}}$. 
These define all the infinite places of a number field, the set of which is denoted by $\Sigma_{K, \infty}$.

The type \href{https://github.com/leanprover-community/mathlib4/blob/caac5b13fb72ba0c5d0b35a0067de108db65e964/Mathlib/NumberTheory/NumberField/Embeddings.lean\#L246}{\small\texttt{NumberField.InfinitePlace}} in \mathlib\footnote{\href{https://github.com/leanprover-community/mathlib3/pull/17844}{https://github.com/leanprover-community/mathlib3/pull/17844}} encodes the infinite places of a number field $K$.
Each term \lstinline{v} consists of an absolute value 
\lstinline{v.val : AbsoluteValue K ℝ}, and a proof (\lstinline{‖v.embedding x‖ = v.val x} for all \lstinline{x : K}) that \lstinline{v.val} is induced by some \lstinline{v.embedding : K →+* ℂ}.

\subsubsection{Completing a number field at an infinite place}
\label{sec:infinitecompl}

Infinite places $v \in \Sigma_{K, \infty}$ define distinct uniform structures $\mathcal{U}_v$ on $K$ through their associated absolute values and we can therefore complete $K$ at $v$ to obtain $K_v$, as described in Section~\ref{sec:uniform}. 
Formally, we do this using \lstinline{AbsoluteValue.Completion} of Section~\ref{sec:withabs}.
\begin{lean}
abbrev (*\lsthref{https://github.com/smmercuri/adele-ring_locally-compact/blob/d47637a02a84cf03d6488a4780c4d0399b04278a/AdeleRingLocallyCompact/NumberTheory/NumberField/Completion.lean\#L197}{NumberField.InfinitePlace.Completion}*) := v.val.Completion
\end{lean}
Because these absolute values are $(\sigma : K \hookrightarrow\mathbb{C})$-induced, we have that $K$ is a completable topological field with respect to each $\mathcal{U}_v$ (Section~\ref{sec:completabletopfield}), and so the completion $K_v$ is a field.
Moreover, the absolute value on $K$ extends to $K_v$ to yield a \lstinline{NormedField} instance.

\begin{lean}
instance : NormedField v.Completion :=
  letI := (WithAbs.uniformInducing_of_comp v.norm_embedding_eq).completableTopField
  UniformSpace.Completion.instNormedFieldOfCompletableTopField (WithAbs v.val)
\end{lean}

\subsubsection{The infinite adele ring}
\label{sec:infiniteadele}

\begin{definition}[\href{https://github.com/smmercuri/adele-ring_locally-compact/blob/d47637a02a84cf03d6488a4780c4d0399b04278a/AdeleRingLocallyCompact/NumberTheory/NumberField/InfiniteAdeleRing.lean\#L54}{\small\texttt{NumberField.InfiniteAdeleRing}}] The infinite adele ring $\mathbb{A}_{K, \infty}$ of a number field $K$ is the product topological ring over all infinite place completions of $K$:
\begin{equation}\label{eq:AKinf}
	\mathbb{A}_{K, \infty} := \prod_{v\in\Sigma_{K, \infty}} K_v.
\end{equation}
\end{definition}

\begin{lean}
def (*\lsthref{https://github.com/smmercuri/adele-ring_locally-compact/blob/d47637a02a84cf03d6488a4780c4d0399b04278a/AdeleRingLocallyCompact/NumberTheory/NumberField/InfiniteAdeleRing.lean\#L54}{NumberField.InfiniteAdeleRing}*) := (v : InfinitePlace K) → v.Completion

instance : TopologicalRing (InfiniteAdeleRing K) := Pi.instTopologicalRing
\end{lean}

\subsection{The finite adele ring of a number field}
\label{sec:finiteadele}

The finite adele ring was formalised in \mathlib, following \cite{defrutosfernandez22}, for any Dedekind domain $R$ of Krull dimension one and its field of fractions $K$, by taking the restricted product over the non-zero prime spectrum $\Spec(R)$ of $R$. 
Formally, we work in the same generality.
However, for simplicity our informal discussion restricts to $R = \mathcal{O}_K$, as required by our main results on number fields.
Throughout Section~\ref{sec:finiteadele} the following variables are in scope.
\begin{lean}
variable (R : Type*) [CommRing R] [IsDomain R] [IsDedekindDomain R] 
  (K : Type*) [Field K] [Algebra R K] [IsFractionRing R K]
\end{lean}

\subsubsection{Finite places of a number field}

Discrete valuations of global fields and their completions at finite places were previously formalised in \cite{defrutosfernandez22, dFFN24}. 
The set of all finite places of a number field $K$, denoted by $\Sigma_{K, f}$, is indexed by the non-zero prime ideals $\mathfrak{p}\in\Spec(\mathcal{O}_K)$ through their $\mathfrak{p}$-adic valuations (Definition~\ref{def:Kpadicval}).
The \mathlib structure \lstinline{Valuation K ℤₘ₀} represents discrete multiplicative valuations of $K$. 
The typing \lstinline[mathescape]{v : IsDedekindDomain.HeightOneSpectrum ($\mathcal{O}$ K)} represents the relation $\mathfrak{p}\in\Spec(\mathcal{O}_K)$ and its multiplicative valuation is \lstinline{v.valuation : Valuation K ℤₘ₀}. 
Moreover, each prime ideal term \lstinline{v} defines a uniform structure on $K$, via \href{https://github.com/leanprover-community/mathlib4/blob/caac5b13fb72ba0c5d0b35a0067de108db65e964/Mathlib/RingTheory/DedekindDomain/AdicValuation.lean\#L356-L357}{\small\texttt{v.adicValued.toUniformSpace}}, which is then used to formalise the $v$-adic completion $K_v$ of $K$ as \href{https://github.com/leanprover-community/mathlib4/blob/caac5b13fb72ba0c5d0b35a0067de108db65e964/Mathlib/RingTheory/DedekindDomain/AdicValuation.lean\#L365-L366}{\small\texttt{v.adicCompletion~K}}. 
The $v$-adic ring of integers, $\mathcal{O}_v := \{x \in K_v \mid v(x) \le \texttt{ofAdd}(0)\}$, 
is the subring \href{https://github.com/leanprover-community/mathlib4/blob/caac5b13fb72ba0c5d0b35a0067de108db65e964/Mathlib/RingTheory/DedekindDomain/AdicValuation.lean\#L393-L394}{\small\texttt{v.adicCompletionIntegers~K}}, 
which is a discrete valuation ring with unique maximal ideal $\mathfrak{m}_v := \{x \in \mathcal{O}_v \mid v(x) < \texttt{ofAdd}(0)\}$.
This is a principal ideal generated by a choice of uniformizer $\pi$, which is any element such that $v(\pi) = \texttt{ofAdd}(-1)$ (this corresponds to having additive valuation $a(\pi) = 1$).
The formal maximal ideal of $\mathcal{O}_v$ is \href{https://github.com/leanprover-community/mathlib4/blob/caac5b13fb72ba0c5d0b35a0067de108db65e964/Mathlib/Topology/Algebra/Valued/ValuedField.lean\#L370}{\small\texttt{Valued.maximalIdeal (v.adicCompletion K)}}.

\subsubsection{The finite adele ring}

\begin{definition}[\href{https://github.com/leanprover-community/mathlib4/blob/caac5b13fb72ba0c5d0b35a0067de108db65e964/Mathlib/RingTheory/DedekindDomain/FiniteAdeleRing.lean\#L305}{\small\texttt{DedekindDomain.FiniteAdeleRing}}]
The \emph{finite adele ring} $\mathbb{A}_{K, f}$ of a number field $K$,
\begin{align}\label{eq:Kfiniteadelering}
	\mathbb{A}_{K, f} := \left\{(x_v)_v \in \prod_{v \in \Sigma_{K, f}} K_v \mathrel{\Bigg|} x_v \in \mathcal{O}_v\ \text{for all but finitely many $v$}\right\},
\end{align}
is the topological ring whose topology is generated by the basis $\left\{\prod_{v}q\mathcal{O}_v \mathrel{\big|} 0 \ne q \in \mathcal{O}_K\right\}$ of neighbourhoods of zero.
\end{definition}
In the version of \mathlib used in this project, the finite adele ring of any Dedekind domain \lstinline{R} and its field of fractions \lstinline{K} was introduced by the work of \cite{defrutosfernandez22}. 
Its topological space structure was incorporated into \mathlib in a later PR\footnote{\href{https://github.com/leanprover-community/mathlib4/pull/14176}{https://github.com/leanprover-community/mathlib4/pull/14176}}; this makes use of the \href{https://github.com/leanprover-community/mathlib4/blob/caac5b13fb72ba0c5d0b35a0067de108db65e964/Mathlib/Topology/Algebra/Nonarchimedean/Bases.lean\#L197-L205}{\small\texttt{SubmodulesRingBasis}} formalism which constructs a \lstinline{TopologicalSpace} instance from a basis of neighbourhoods of zero, which is compatible with the underlying ring structure.  
In other words, it also generates the appropriate \lstinline{TopologicalRing} instance. 

\begin{remark}\label{rmk:nhds}
Open sets containing zero in the finite adele ring are generated by sets of the form $\prod_{v \in S} B_{r_v}(0) \times \prod_{v \notin S} \mathcal{O}_v$, where $S$ is finite and $0 < r_v \in \mathbb{Z}$. 
The family of all such sets is parameterised by $0\ne q \in K$ via $\prod_v q\mathcal{O}_v$.
However, the neighbourhoods of zero filter is generated through a basis by upwards inclusion.
Since $\mathcal{O}_v \subseteq q\mathcal{O}_v$ for any $v$ dividing the denominator of $q$, it suffices to consider only $q \in \mathcal{O}_K$ in order to give a basis of the neighbourhoods of zero.
\end{remark}

\begin{remark}\label{rmk:refactor_1}
Since the completion of this project, both the definition and the topology of the finite adele ring in \mathlib have seen a significant refactor to use the \lstinline{RestrictedProduct} API\footnote{\href{https://github.com/leanprover-community/mathlib4/pull/20021}{https://github.com/leanprover-community/mathlib4/pull/20021}}\textsuperscript{,}\footnote{\href{https://github.com/leanprover-community/mathlib4/pull/23542}{https://github.com/leanprover-community/mathlib4/pull/23542}}.
The impact of this refactor on the local compactness of the finite adele ring is discussed in Remark~\ref{rmk:refactor_2}.
\end{remark}

\subsection{The adele ring} 
\label{sec:adele}

\begin{definition}[\href{https://github.com/smmercuri/adele-ring_locally-compact/blob/d47637a02a84cf03d6488a4780c4d0399b04278a/AdeleRingLocallyCompact/NumberTheory/NumberField/AdeleRing.lean\#L42}{\small\texttt{NumberField.AdeleRing}}]\label{def:adelering} The adele ring $\mathbb{A}_K$ of a number field is the product topological ring of the infinite and the finite adele rings:
\begin{equation}\label{eq:adelering}
	\mathbb{A}_K := \mathbb{A}_{K, \infty}\times\mathbb{A}_{K, f}.
\end{equation}
\end{definition}
\begin{lean}[mathescape]
def (*\lsthref{https://github.com/smmercuri/adele-ring_locally-compact/blob/d47637a02a84cf03d6488a4780c4d0399b04278a/AdeleRingLocallyCompact/NumberTheory/NumberField/AdeleRing.lean\#L42}{NumberField.AdeleRing}*) (K : Type*) [Field K] [NumberField K] := 
  InfiniteAdeleRing K × DedekindDomain.FiniteAdeleRing ($\mathcal{O}$ K) K
\end{lean}
\section{Local compactness of the adele ring of a number field}
\label{sec:locallycompact}

This section contains an informal and formal proof that the adele ring of a number field is locally compact, using \cite[Chapter 2]{CF67} as source. 
Full details of the formal proof can be found in the source code.
The informal proofs are written so as to align with the formal proofs.

\subsection{Local compactness of the infinite adele ring}
\label{sec:locallycompactinfinite}

Throughout Section~\ref{sec:locallycompactinfinite} the following variables are in scope.
\begin{lean}
variable {K L : Type*} [Field K] [NormedField L] [CompleteSpace L] 
  {v : AbsoluteValue K ℝ} {f : WithAbs v →+* L} 
\end{lean}
We first show that the completion of a number field at an infinite place is locally compact.
As in Sections~\ref{sec:abs_induced} and \ref{sec:adele_sec}, our strategy is to first formalise the local compactness in higher abstraction for fields with an associated induced absolute value, from which the result for number fields and infinite places can be derived. 

\begin{theorem}\label{thm:induced_compl_lc} Let $K$ and $L$ be fields, let $|\cdot|_L : L \to \mathbb{R}$ be an absolute value on $L$, and assume that $L$ is locally compact and complete with respect to the topology induced by $|\cdot|_L$. If $\sigma : K \hookrightarrow L$ is a field embedding, then the completion of $K$ with respect to the $\sigma$-induced absolute value $|\cdot|_{\sigma}$ is locally compact.
\end{theorem}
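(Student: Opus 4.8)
The plan is to realise the completion $\widehat{K}$ of $K$ with respect to $|\cdot|_{\sigma}$ concretely as the closure of $\sigma(K)$ inside $L$, and then to invoke the fact that closed subspaces of locally compact spaces are locally compact. The starting point is that, by Definition~\ref{def:induced_abs}, we have $|x|_{\sigma} = |\sigma(x)|_L$ for every $x \in K$, so by \texttt{WithAbs.uniformInducing\_of\_comp} the map $\sigma$ is uniformly inducing for the $\sigma$-induced uniform structure $\mathcal{U}_{\sigma}$ on $K$; in particular it is uniformly continuous.

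First I would introduce $Y := \overline{\sigma(K)} \subseteq L$, equipped with the subspace uniform structure. Since $L$ is complete and $Y$ is closed in $L$, the space $Y$ is complete; since $L$ is a normed field it is Hausdorff, so $Y$ is separated; and since $L$ is locally compact and $Y$ is closed in $L$, the space $Y$ is locally compact. The corestriction $\iota : K \to Y$ of $\sigma$ is still uniformly inducing --- this is immediate from $\sigma$ being uniformly inducing into $L$ together with $\sigma(K) \subseteq Y$ --- hence uniformly continuous, and its image is dense in $Y$ by construction. Thus $(Y, \iota)$ is an abstract completion of $(K, \mathcal{U}_{\sigma})$ in the sense of Definition~\ref{def:abstractcompl}.

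The completion $\widehat{K}$, built by applying \texttt{UniformSpace.Completion}, together with its canonical dense embedding, is likewise an abstract completion of $(K, \mathcal{U}_{\sigma})$. By Theorem~\ref{thm:abstractcompl_iso} the two abstract completions are isomorphic as uniform spaces, so in particular $\widehat{K}$ and $Y$ are homeomorphic. Since local compactness is a topological invariant, the local compactness of $Y$ established above transfers to $\widehat{K}$, which completes the proof.

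I expect the main work to be the \mathlib bookkeeping around uniform structures rather than any genuine mathematical difficulty: one must check that the uniformity $\mathcal{U}_{\sigma}$ on $K$ coming from the \texttt{WithAbs} type synonym agrees with the one induced by the corestriction $\iota : K \to Y$, and then package $Y$ and $\iota$ as an \texttt{AbstractCompletion} term in precisely the form that \texttt{AbstractCompletion.compareEquiv} consumes, before transporting local compactness along the resulting uniform equivalence. All the underlying facts --- a closed subset of a complete space is complete, a closed subset of a locally compact space is locally compact, and uniqueness up to isomorphism of abstract completions --- are already available, so the proof is largely a matter of assembly.
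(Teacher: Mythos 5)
Your proposal is correct, but it reaches the conclusion by a different route than the paper. The paper's proof goes forward from the completion: since $\sigma$ is uniformly continuous for the pullback uniformity, it extends by the universal property of uniform completions to a map $\sigma_v : K_v \to L$; one then checks that $\sigma_v$ is an isometry (by density, since $\sigma$ is), so that $\sigma_v$ is a closed embedding of the complete space $K_v$ into $L$, and local compactness pulls back along closed embeddings. You instead build the target first: you take $Y = \overline{\sigma(K)} \subseteq L$, verify directly that $Y$ is complete, separated and locally compact as a closed subspace of $L$, exhibit $(Y,\iota)$ as an abstract completion of $(K,\mathcal{U}_\sigma)$, and invoke the uniqueness of abstract completions (Theorem~\ref{thm:abstractcompl_iso}) to transport local compactness across the resulting uniform equivalence. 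Both arguments encode the same underlying fact --- that $K_v$ is homeomorphic to the closed subspace $\overline{\sigma(K)}$ of $L$ --- but they lean on different library lemmas. The paper's version is the lighter one to formalise, since \texttt{UniformSpace.Completion.extensionHom}, \texttt{Isometry.closedEmbedding} and \texttt{ClosedEmbedding.locallyCompactSpace} chain together with essentially no glue, the only real work being the density argument for the isometry identity. Your version trades that isometry computation for the bookkeeping of packaging $Y$ and the corestriction as an \texttt{AbstractCompletion} term (note that the \mathlib structure requires \texttt{UniformInducing}, not merely uniform continuity as in the paper's informal Definition~\ref{def:abstractcompl} --- your corestriction does satisfy this) and then transporting an instance along \texttt{compareEquiv}; what it buys is that closedness of the image comes for free from defining $Y$ as a closure, and it makes the identification of $K_v$ with a concrete subspace of $L$ explicit, which can be independently useful. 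Either proof is acceptable.
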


\begin{proof} Since the uniform structure given by $|\cdot|_{\sigma}$ is the pullback uniform structure, $\sigma$ is uniformly continuous. Therefore $\sigma$ extends to an embedding $\sigma_v : K_v \hookrightarrow L$ by the universal property of uniform completions. Further, $\sigma$ and $\sigma_v$ are both isometries; thus, $K_v$ has a closed image under $\sigma_v$ and so is locally compact.
\end{proof}

\begin{lean}[escapeinside={(*}{*)}]
namespace AbsoluteValue.Completion

abbrev (*\lsthref{https://github.com/smmercuri/adele-ring_locally-compact/blob/d47637a02a84cf03d6488a4780c4d0399b04278a/AdeleRingLocallyCompact/NumberTheory/NumberField/Completion.lean\#L150-L152}{extensionEmbeddingOfComp}*) (h : ∀ x, ‖f x‖ = v x) : v.Completion →+* L :=
  UniformSpace.Completion.extensionHom _
    (WithAbs.uniformInducing_of_comp h).uniformContinuous.continuous
\end{lean}

\begin{lean}[escapeinside={(*}{*)}]
theorem (*\lsthref{https://github.com/smmercuri/adele-ring_locally-compact/blob/d47637a02a84cf03d6488a4780c4d0399b04278a/AdeleRingLocallyCompact/NumberTheory/NumberField/Completion.lean\#L161-L168}{extensionEmbeddingOfComp\_dist\_eq}*) (h : ∀ x, ‖f x‖ = v x) 
    (x y : v.Completion) :
    dist (extensionEmbeddingOfComp h x) (extensionEmbeddingOfComp h y) =
      dist x y := by ...
\end{lean}

\begin{lean}[escapeinside={(*}{*)}]
theorem (*\lsthref{https://github.com/smmercuri/adele-ring_locally-compact/blob/d47637a02a84cf03d6488a4780c4d0399b04278a/AdeleRingLocallyCompact/NumberTheory/NumberField/Completion.lean\#L172-L174}{isometry\_extensionEmbeddingOfComp}*) (h : ∀ x, ‖f x‖ = v x) :
    Isometry (extensionEmbeddingOfComp h) :=
  Isometry.of_dist_eq <| extensionEmbeddingOfComp_dist_eq h
\end{lean}

\begin{lean}[escapeinside={(*}{*)}]
theorem (*\lsthref{https://github.com/smmercuri/adele-ring_locally-compact/blob/d47637a02a84cf03d6488a4780c4d0399b04278a/AdeleRingLocallyCompact/NumberTheory/NumberField/Completion.lean\#L178-L180}{closedEmbedding\_extensionEmbeddingOfComp}*) (h : ∀ x, ‖f x‖ = v x) :
    ClosedEmbedding (extensionEmbeddingOfComp h) :=
  (isometry_extensionEmbeddingOfComp h).closedEmbedding
\end{lean}

\begin{lean}[escapeinside={(*}{*)}]
theorem (*\lsthref{https://github.com/smmercuri/adele-ring_locally-compact/blob/d47637a02a84cf03d6488a4780c4d0399b04278a/AdeleRingLocallyCompact/NumberTheory/NumberField/Completion.lean\#L184-L186}{locallyCompactSpace}*) [LocallyCompactSpace L] (h : ∀ x, ‖f x‖ = v x) :
    LocallyCompactSpace (v.Completion) :=
  (closedEmbedding_extensionEmbeddingOfComp h).locallyCompactSpace

end AbsoluteValue.Completion
\end{lean}

\begin{corollary} \label{thm:infcompl_lc} The completion $K_v$ of a number field $K$ at $v\in\Sigma_{K, \infty}$ is locally compact.
\end{corollary}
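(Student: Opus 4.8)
The plan is to specialise Theorem~\ref{thm:induced_compl_lc} to the case $L = \mathbb{C}$. Recall from Section~\ref{sec:infiniteplaces} that an infinite place $v \in \Sigma_{K,\infty}$ is, by construction, equipped with an embedding $\sigma : K \hookrightarrow \mathbb{C}$ --- namely \lstinline{v.embedding} --- satisfying $|\sigma(x)|_{\mathbb{C}} = v(x)$ for every $x \in K$ (the \mathlib lemma \lstinline{NumberField.InfinitePlace.norm_embedding_eq}). Thus the absolute value underlying $v$ is precisely the $\sigma$-induced absolute value of Definition~\ref{def:induced_abs}, and $K_v$, which we defined as \lstinline{v.val.Completion}, is exactly the completion of $K$ with respect to a $\sigma$-induced absolute value.

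It then remains only to check that $\mathbb{C}$, with its usual absolute value, satisfies the hypotheses of Theorem~\ref{thm:induced_compl_lc}: it is a complete normed field, and it is locally compact by Heine--Borel, since closed balls in $\mathbb{C}$ are compact, being closed and bounded in a finite-dimensional real normed space. Both facts are available in \mathlib as instances. Feeding $\sigma$, viewed as a ring homomorphism from \lstinline{WithAbs v.val} to $\mathbb{C}$, together with the norm identity above, into the abstract result \lstinline{AbsoluteValue.Completion.locallyCompactSpace} then produces the desired \lstinline{LocallyCompactSpace} instance on $K_v$, which is the claim.

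No genuine obstacle is expected: all of the mathematical content sits in Theorem~\ref{thm:induced_compl_lc} and in the isometry argument behind it. The only care needed is the usual instance bookkeeping of Section~\ref{sec:withabs} --- ensuring that the \lstinline{NormedField} and \lstinline{UniformSpace} instances on the number field are the ones routed through the type synonym \lstinline{WithAbs v.val}, so that the abstract statement really does pertain to the uniform structure defining $K_v$, and that the coercion of \lstinline{v.embedding} to a ring homomorphism from \lstinline{WithAbs v.val} to $\mathbb{C}$ is the shape the lemma expects.
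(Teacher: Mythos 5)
Your proposal matches the paper's proof exactly: the formal proof is a one-line application of \lstinline{AbsoluteValue.Completion.locallyCompactSpace} (the formalisation of Theorem~\ref{thm:induced_compl_lc}) to \lstinline{v.norm_embedding_eq}, with $L = \mathbb{C}$ supplying the complete, locally compact target via type class inference. The instance bookkeeping through \lstinline{WithAbs v.val} that you flag is indeed the only subtlety, and it is handled exactly as you describe.
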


\begin{lean}
instance (*\lsthref{https://github.com/smmercuri/adele-ring_locally-compact/blob/d47637a02a84cf03d6488a4780c4d0399b04278a/AdeleRingLocallyCompact/NumberTheory/NumberField/Completion.lean\#L209-L210}{NumberField.InfinitePlace.Completion.locallyCompactSpace}*) 
    (v : InfinitePlace K) : LocallyCompactSpace (v.Completion) :=
  AbsoluteValue.Completion.locallyCompactSpace v.norm_embedding_eq
\end{lean}

\begin{theorem}\label{thm:infiniteadele_lc} The infinite adele ring $\mathbb{A}_{K, \infty}$ of a number field $K$ is locally compact.
\end{theorem}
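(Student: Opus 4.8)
The plan is to realise $\mathbb{A}_{K, \infty} = \prod_{v \in \Sigma_{K, \infty}} K_v$ as a \emph{finite} product of locally compact spaces and invoke the standard fact, recorded after Definition~\ref{def:lc}, that finite products of locally compact spaces are locally compact. There are exactly three ingredients. First, a number field $K$ has only finitely many infinite places, so the index set $\Sigma_{K, \infty}$ is finite (in \mathlib, the \texttt{Finite}/\texttt{Fintype} instance on \texttt{NumberField.InfinitePlace K}). Second, each factor $K_v$ is locally compact by Corollary~\ref{thm:infcompl_lc}. Third, a finite product of locally compact spaces is locally compact (available in \mathlib as the \texttt{Pi.locallyCompactSpace} instance for a finite index type). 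Combining these gives local compactness of $\mathbb{A}_{K, \infty}$.

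Unwinding this informally: given $x = (x_v)_v \in \mathbb{A}_{K, \infty}$ and a neighbourhood $N$ of $x$, shrink $N$ to a basic product neighbourhood $\prod_v N_v$ where each $N_v$ is a neighbourhood of $x_v$ (with $N_v = K_v$ for all but finitely many $v$, which here is automatic since the index set is itself finite). Using local compactness of each $K_v$, choose a compact neighbourhood $S_v \subseteq N_v$ of $x_v$. Then $\prod_v S_v$ is a neighbourhood of $x$ contained in $N$, and it is compact as a finite product of compact sets — one does not even need the full strength of Tychonoff. Hence $\mathbb{A}_{K, \infty}$ is locally compact.

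Formally, this is essentially a one-line instance declaration: since \texttt{InfiniteAdeleRing K} is \emph{definitionally} the dependent function type \texttt{(v : InfinitePlace K) $\to$ v.Completion} carrying the product topology (via \texttt{Pi.instTopologicalRing}), Lean's type class inference can assemble the \texttt{Finite} instance on \texttt{InfinitePlace K}, the per-factor \texttt{LocallyCompactSpace (v.Completion)} instance coming from Corollary~\ref{thm:infcompl_lc}, and the product instance, to deduce \texttt{LocallyCompactSpace (InfiniteAdeleRing K)}. I do not expect any real obstacle here; the only thing to be careful about is that the topology used on the infinite adele ring genuinely agrees with the \texttt{Pi} topology, which holds by construction, so no bridging lemma is needed.
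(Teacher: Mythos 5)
Your proposal is correct and follows exactly the paper's argument: each $K_v$ is locally compact by Corollary~\ref{thm:infcompl_lc}, the index set $\Sigma_{K,\infty}$ is finite, and a finite product of locally compact spaces is locally compact (formally, \texttt{Pi.locallyCompactSpace\_of\_finite}). No differences worth noting.
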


\begin{proof} Since each completion $K_v$ is locally compact by Corollary~\ref{thm:infcompl_lc} then, as a finite product of locally compact spaces, so is $\mathbb{A}_{K, \infty}$.
\end{proof}

\begin{lean}
theorem (*\lsthref{https://github.com/smmercuri/adele-ring_locally-compact/blob/d47637a02a84cf03d6488a4780c4d0399b04278a/AdeleRingLocallyCompact/NumberTheory/NumberField/InfiniteAdeleRing.lean\#L79-L80}{NumberField.InfiniteAdeleRing.locallyCompactSpace}*) [NumberField K] : 
    LocallyCompactSpace (InfiniteAdeleRing K) :=
  Pi.locallyCompactSpace_of_finite
\end{lean}
 
\subsection{Local compactness of the finite adele ring}
\label{sec:locallycompactfinite}

As a subring of an infinite product, it is not immediately clear that the finite adele ring is locally compact.
However, each $\mathcal{O}_v$ is \emph{compact}, and each finite adele is in $\mathcal{O}_v$ at infinitely-many places.
This is the reason why the finite adele ring is locally compact.
To prove this, we first show that $\mathcal{O}_v$ is compact and that $K_v$ is locally compact in Section~\ref{sec:compactO}.
Then in Sections~\ref{sec:finiteSadele} and \ref{sec:finiteSadele_cover} we show that the finite adele ring admits an open cover of locally compact \emph{finite $S$-adele rings}.

\begin{remark}\label{rmk:refactor_2}
Since the completion of this project, the finite adele ring has seen a significant refactor to use the \lstinline{RestrictedProduct} API\footnote{\href{https://github.com/leanprover-community/mathlib4/pull/20021}{https://github.com/leanprover-community/mathlib4/pull/20021}}\textsuperscript{,}\footnote{\href{https://github.com/leanprover-community/mathlib4/pull/23542}{https://github.com/leanprover-community/mathlib4/pull/23542}}.
While the compactness results of Section~\ref{sec:compactO} remain essential, the results from Sections~\ref{sec:finiteSadele} and \ref{sec:finiteSadele_cover} on the local compactness of the finite adele ring have now been generalised to a corresponding result on restricted products.
Some of the auxiliary objects used in this section, such as \href{https://github.com/leanprover-community/mathlib4/blob/caac5b13fb72ba0c5d0b35a0067de108db65e964/Mathlib/RingTheory/DedekindDomain/FiniteAdeleRing.lean\#L68-L69}{\small\texttt{ProdAdicCompletions}} and \href{https://github.com/leanprover-community/mathlib4/blob/caac5b13fb72ba0c5d0b35a0067de108db65e964/Mathlib/RingTheory/DedekindDomain/FiniteAdeleRing.lean\#L44-L45}{\small\texttt{FiniteIntegralAdeles}}, have since been removed from \mathlib.
Nevertheless, we retain the formal $S$-adele argument here as a record of the first formalised proof of this fact.
\end{remark} 

Throughout Section~\ref{sec:locallycompactfinite} the following variables are in scope.
\begin{lean}
variable {R : Type*} [CommRing R] [IsDedekindDomain R] 
  (K : Type*) [Field K] [Algebra R K] [IsFractionRing R K] [NumberField K]
  (v : HeightOneSpectrum R) (S : Finset (HeightOneSpectrum R))
\end{lean}

\subsubsection{Compactness of the $v$-adic ring of integers}
\label{sec:compactO}

\begin{theorem}\label{thm:Ov_compact} The ring of integers $\mathcal{O}_v$ of the completion $K_v$ of a number field $K$ at $v \in \Sigma_{K, f}$ is compact.
\end{theorem}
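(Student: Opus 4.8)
The plan is to exhibit $\mathcal{O}_v$ as a uniform space that is simultaneously \emph{complete} and \emph{totally bounded}, and then to invoke the standard fact (available in \mathlib) that a complete, totally bounded uniform space is compact.

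Completeness comes essentially for free. We have $\mathcal{O}_v = \{x \in K_v \mid v(x) \le \mathtt{ofAdd}(0)\}$, which is the preimage of a closed set under the continuous extended valuation on $K_v$, hence a closed subset of the complete space $K_v$, hence complete.

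The substantive step is total boundedness, and here I would use that $\mathcal{O}_v$ is a discrete valuation ring with maximal ideal $\mathfrak{m}_v = (\pi)$, together with the fact that the powers $\mathfrak{m}_v^n$ ($n \ge 0$) are \emph{open} subgroups of $\mathcal{O}_v$ forming a neighbourhood basis of $0$; equivalently, the sets $\{(x,y) \mid x - y \in \mathfrak{m}_v^n\}$ form a basis of the uniformity on $\mathcal{O}_v$. It then suffices to show that each index $[\mathcal{O}_v : \mathfrak{m}_v^n]$ is finite: picking coset representatives $a_1, \dots, a_k$, the finitely many cosets $a_i + \mathfrak{m}_v^n$ cover $\mathcal{O}_v$, and each is contained in a single basic entourage. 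Finiteness of the quotients follows from the assumed result of \cite{dFFN24} that the residue field $\mathcal{O}_v/\mathfrak{m}_v$ is finite, by induction on $n$: since $\mathcal{O}_v$ is a DVR, multiplication by $\pi^i$ induces an isomorphism $\mathcal{O}_v/\mathfrak{m}_v \cong \mathfrak{m}_v^i/\mathfrak{m}_v^{i+1}$, so the short exact sequences $0 \to \mathfrak{m}_v^i/\mathfrak{m}_v^{i+1} \to \mathcal{O}_v/\mathfrak{m}_v^{i+1} \to \mathcal{O}_v/\mathfrak{m}_v^i \to 0$ give $\#(\mathcal{O}_v/\mathfrak{m}_v^n) = \#(\mathcal{O}_v/\mathfrak{m}_v)^n < \infty$.

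I expect the main obstacle to be the topological bookkeeping in the total-boundedness step: relating \mathlib's uniformity on \lstinline{v.adicCompletionIntegers K} to the filtration by powers of \lstinline{Valued.maximalIdeal}, verifying that these powers are open and form a basis of neighbourhoods of $0$, and converting "open subgroup of finite index" into "can be covered by finitely many sets, each small for some basic entourage". (An alternative route that sidesteps part of this is to identify $\mathcal{O}_v$ topologically with the inverse limit $\varprojlim_n \mathcal{O}_v/\mathfrak{m}_v^n$ of finite discrete rings and appeal to Tychonoff, but making that homeomorphism precise in Lean looks comparable in effort.)
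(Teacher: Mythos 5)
Your proposal matches the paper's proof: both establish compactness via completeness (as a closed subset of the complete space $K_v$) plus total boundedness, with the latter reduced to finiteness of the quotients $\mathcal{O}_v/\mathfrak{m}_v^n$, which in turn rests on the finite residue field result of \cite{dFFN24}. The only cosmetic difference is that the paper proves finiteness of $\mathcal{O}_v/\mathfrak{m}_v^n$ by an explicit injective $\pi$-adic digit map into $(\mathcal{O}_v/\mathfrak{m}_v)^n$ rather than by your induction on short exact sequences; these are the same argument in different clothing.
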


\begin{proof} We show equivalently that $\mathcal{O}_v$ is complete and totally bounded.  
As a closed subset of the complete space $K_v$, $\mathcal{O}_v$ is complete. 
It is totally bounded if, for any $\gamma\in \mathbb{Z}_m$, there exists a finite cover of $\mathcal{O}_v$ of open balls $B_{\gamma}(t) := \{x\mid v(x - t) < \gamma\}$ of radius $\gamma$. 
It suffices to check this for $\gamma \le 1$, in which case we take the finitely-many representatives $t_i$ of $\mathcal{O}_v/\mathfrak{m}_v^{-\text{\lstinline{toAdd}}(\gamma) + 1}$. 
The balls $B_{\gamma}(t_i)$ then cover $\mathcal{O}_v$.
\end{proof}

\begin{lean}
namespace IsDedekindDomain.HeightOneSpectrum.adicCompletionIntegers

theorem (*\lsthref{https://github.com/smmercuri/adele-ring_locally-compact/blob/d47637a02a84cf03d6488a4780c4d0399b04278a/AdeleRingLocallyCompact/RingTheory/DedekindDomain/AdicValuation.lean\#L249-L256}{isClosed}*) : IsClosed (v.adicCompletionIntegers K).carrier := ... 
\end{lean}

\begin{lean}
theorem (*\lsthref{https://github.com/smmercuri/adele-ring_locally-compact/blob/d47637a02a84cf03d6488a4780c4d0399b04278a/AdeleRingLocallyCompact/RingTheory/DedekindDomain/AdicValuation.lean\#L284-L286}{totallyBounded}*) : TotallyBounded (v.adicCompletionIntegers K).carrier := ...
\end{lean}

\begin{lean}
theorem (*\lsthref{https://github.com/smmercuri/adele-ring_locally-compact/blob/d47637a02a84cf03d6488a4780c4d0399b04278a/AdeleRingLocallyCompact/RingTheory/DedekindDomain/AdicValuation.lean\#L292-L293}{isCompact}*) : IsCompact (v.adicCompletionIntegers K).carrier :=
  isCompact_iff_totallyBounded_isComplete.2 
    ⟨totallyBounded K v, (isClosed K v).isComplete⟩
\end{lean}

\begin{lean}
instance (*\lsthref{https://github.com/smmercuri/adele-ring_locally-compact/blob/d47637a02a84cf03d6488a4780c4d0399b04278a/AdeleRingLocallyCompact/RingTheory/DedekindDomain/AdicValuation.lean\#L295-L296}{compactSpace}*) : CompactSpace (v.adicCompletionIntegers K) :=
  isCompact_iff_compactSpace.1 <| isCompact K v
\end{lean}

The proof of Theorem~\ref{thm:Ov_compact} requires that the residue field $\mathcal{O}_v/\mathfrak{m}_v$ is \emph{finite}.
As described in the introduction, this has been \href{https://github.com/mariainesdff/LocalClassFieldTheory/blob/9b1150821ebe8a48ac8f5d5ac6b88e0028a8e427/LocalClassFieldTheory/DiscreteValuationRing/ResidueField.lean#L94-L103}{formalised elsewhere}, \cite{dFFN24}, and is currently in the process of being upstreamed to \mathlib\footnote{\href{https://github.com/leanprover-community/mathlib4/pull/26537}{https://github.com/leanprover-community/mathlib4/pull/26537}}\textsuperscript{,}\footnote{\href{https://github.com/leanprover-community/mathlib4/pull/26538}{https://github.com/leanprover-community/mathlib4/pull/26538}}. 
We assume the statement with a \lstinline{sorry} proof in our work to avoid duplication.
However, we prove under its assumption that $\mathcal{O}_v/\mathfrak{m}_v^n$ is finite for any integer $n \ge 0$.
To do this we define the finite-coefficient map \href{https://github.com/smmercuri/adele-ring_locally-compact/blob/d47637a02a84cf03d6488a4780c4d0399b04278a/AdeleRingLocallyCompact/RingTheory/DedekindDomain/AdicValuation.lean\#L224-L227}{\small\texttt{toFiniteCoeffs n h}$\pi$} sending $x \in \mathcal{O}_v/\mathfrak{m}_v^n$ to an $n$-tuple $(x_1, ..., x_n)$ of $v$-adic digits in its $\pi$-adic expansion, where $\pi$ is a uniformizer.
This gives an injective function $\mathcal{O}_v/\mathfrak{m}_v^n \to (\mathcal{O}_v/\mathfrak{m}_v)^n$.

\begin{lean} 
theorem (*\lsthref{https://github.com/smmercuri/adele-ring_locally-compact/blob/d47637a02a84cf03d6488a4780c4d0399b04278a/AdeleRingLocallyCompact/RingTheory/DedekindDomain/AdicValuation.lean\#L229-L237}{toFiniteCoeffs\_injective}*) {π : v.adicCompletionIntegers K} (n : ℕ) 
    (hπ : IsUniformizer π.val) : (toFiniteCoeffs n hπ).Injective := ...
\end{lean}

\begin{lean}
instance (*\lsthref{https://github.com/smmercuri/adele-ring_locally-compact/blob/d47637a02a84cf03d6488a4780c4d0399b04278a/AdeleRingLocallyCompact/RingTheory/DedekindDomain/AdicValuation.lean\#L240-L243}{quotient\_maximalIdeal\_pow\_finite}*) {π : v.adicCompletionIntegers K} (n : ℕ)
    (hπ : IsUniformizer π.val) :
    Finite (v.adicCompletionIntegers K / 
    	(Valued.maximalIdeal (v.adicCompletion K)) ^ n) :=
  Finite.of_injective _ (toFiniteCoeffs_injective n hπ)
\end{lean}

\begin{theorem}\label{thm:fincompl_lc} The completion $K_v$ of a number field at $v\in\Sigma_{K, f}$ is locally compact.
\end{theorem}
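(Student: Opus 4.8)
The plan is to produce a compact neighbourhood of $0$ in $K_v$ and then slide it around the space using the additive group structure of $K_v$. First I would show that the ring of integers $\mathcal{O}_v$ is an \emph{open} subset of $K_v$. Neighbourhoods of $0$ in $K_v$ are generated by the sets $\{x \mid v(x) < \gamma\}$ with $\gamma \in \mathbb{Z}_{m0}$ and $\gamma > 0$; since the value group is discrete, nothing lies strictly between $\texttt{ofAdd}(0)$ and $\texttt{ofAdd}(1)$, so $\mathcal{O}_v = \{x \mid v(x) \le \texttt{ofAdd}(0)\} = \{x \mid v(x) < \texttt{ofAdd}(1)\}$ is itself a basic neighbourhood of $0$. (Equivalently: $\mathcal{O}_v$ is an additive subgroup of $K_v$ that contains a neighbourhood of $0$, hence is open.)

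Second, $\mathcal{O}_v$ is compact by Theorem~\ref{thm:Ov_compact}, so $0$ admits a compact neighbourhood in $K_v$. Third, $K_v$ is a Hausdorff topological additive group, so for every $x \in K_v$ translation by $x$ is a homeomorphism carrying $\mathcal{O}_v$ to the compact neighbourhood $x + \mathcal{O}_v$ of $x$; thus every point of $K_v$ has a compact neighbourhood, which in a Hausdorff space is precisely local compactness in the sense of Definition~\ref{def:lc}. Formally, this last step is the \mathlib lemma that a topological group possessing a single compact neighbourhood of the identity is a \texttt{LocallyCompactSpace}. Alternatively, one can observe that the ideals $\mathfrak{m}_v^n$ are open, are compact (the image of $\mathcal{O}_v$ under multiplication by the fixed nonzero scalar $\pi^n$, which is a homeomorphism of $K_v$), and form a neighbourhood basis of $0$, so translating a sufficiently small one into any prescribed neighbourhood of any point yields the required compact neighbourhood directly.

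The only genuine mathematical content here is the compactness of $\mathcal{O}_v$, which is Theorem~\ref{thm:Ov_compact} and ultimately rests on the finiteness of the residue field; everything else is formal manipulation of the group topology. The main obstacle in the formalisation is therefore bureaucratic rather than conceptual: ensuring that the intended \texttt{UniformSpace}/\texttt{TopologicalRing} instance on \texttt{v.adicCompletion K} is the one picked up by type class inference, identifying \texttt{v.adicCompletionIntegers K} as an open set for that topology, and handing the compact neighbourhood to the appropriate topological-group lemma with the instances lined up.
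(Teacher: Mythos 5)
Your proposal is correct and follows essentially the same route as the paper: exhibit a compact neighbourhood of $0$ (the paper uses the ball $B_1(0)$, compact as a closed subset of the compact $\mathcal{O}_v$; you use the clopen set $\mathcal{O}_v$ itself) and then invoke the topological-additive-group lemma that a single compact neighbourhood of $0$ yields local compactness, which is precisely the \texttt{locallyCompactSpace\_of\_mem\_nhds\_of\_addGroup} step in the formal proof. Your observation that discreteness of the value group makes $\mathcal{O}_v$ open is sound and matches the role of the valuation neighbourhood basis in the paper's argument.
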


\begin{proof}
It is enough to show that $0$ has a compact neighbourhood because we can translate and dilate this to be contained in neighbourhoods around other points.
The neighbourhood $B_1(0)$ of $0$ is a closed subspace of the compact space $\mathcal{O}_v$, so it is compact.
\end{proof}

\begin{lean}[escapeinside={(*}{*)}]
theorem (*\lsthref{https://github.com/smmercuri/adele-ring_locally-compact/blob/d47637a02a84cf03d6488a4780c4d0399b04278a/AdeleRingLocallyCompact/RingTheory/DedekindDomain/AdicValuation.lean\#L307-L310}{adicCompletion.isCompact\_nhds\_zero}*) {γ : ℤₘ₀(*$^{\times}$*)} (hγ : γ ≤ 1) : 
    IsCompact { y : v.adicCompletion K | Valued.v y < γ } :=
  (isCompact K v).of_isClosed_subset (isClosed_nhds_zero K v γ)
    <| fun _ hx => le_of_lt (lt_of_lt_of_le (Set.mem_setOf.1 hx) hγ)
\end{lean}

\begin{lean}[escapeinside={(*}{*)}]
instance (*\lsthref{https://github.com/smmercuri/adele-ring_locally-compact/blob/d47637a02a84cf03d6488a4780c4d0399b04278a/AdeleRingLocallyCompact/RingTheory/DedekindDomain/AdicValuation.lean\#L315-L317}{adicCompletion.locallyCompactSpace}*) : 
    LocallyCompactSpace (v.adicCompletion K) :=
  (isCompact_nhds_zero K v le_rfl).locallyCompactSpace_of_mem_nhds_of_addGroup
    <| (hasBasis_nhds_zero K v).mem_of_mem le_rfl
\end{lean}
\subsubsection{The finite $S$-adele ring}
\label{sec:finiteSadele}

Associated to any element $x$ of the finite adele ring is its \emph{support} -- a finite set of places $S_x\subseteq\Sigma_{K, f}$ for which $x_v\in \mathcal{O}_v$ if and only if $v\notin S_x$. It is the dependence of the set $S_x$ on $x$ that makes the local compactness of the finite adele ring difficult to view. 
On the other hand, if we fix a finite set $S\subseteq\Sigma_{K, f}$, and consider only the finite adeles $x$ for which $S_x \subseteq S$, then this is an easier space to understand as locally compact. These are the finite $S$-adeles.

\begin{definition}[\href{https://github.com/smmercuri/adele-ring_locally-compact/blob/d47637a02a84cf03d6488a4780c4d0399b04278a/AdeleRingLocallyCompact/RingTheory/DedekindDomain/FinsetAdeleRing.lean\#L186}{\small\texttt{DedekindDomain.FinsetAdeleRing}}]\label{def:finiteSadelering} Let $S\subseteq\Sigma_{K, f}$ be a finite set of places of a number field $K$. The finite $S$-adele ring $\mathbb{A}_{K, S, f}$ is the topological ring defined as
\begin{align*}
	\mathbb{A}_{K, S, f} := \left\{(x_v)_v \in \prod_{v\in\Sigma_{K, f}} K_v \mathrel{\bigg|} x_v \in \mathcal{O}_v\ \text{for all $v\notin S$}\right\}.
\end{align*}
\end{definition}
Mathematically, $\mathbb{A}_{K, S, f}$ is given the subspace topology of $\prod_v K_v$, but it can also be viewed as a subspace of $\mathbb{A}_{K, f}$, so it has two immediate sources of topologies. 
These define the \emph{same} topology on $\mathbb{A}_{K, S, f}$, which is a crucial step in the overall proof that $\mathbb{A}_{K, f}$ is locally compact.

We formalise $\mathbb{A}_{K, S, f}$ as a subtype of $\prod_v K_v$, distinct from $\mathbb{A}_{K, f}$.
It is given the subspace topology of $\prod_v K_v$.
It does not inherit the subspace topology of $\mathbb{A}_{K, f}$, but we prove in Section~\ref{sec:finiteSadele_cover} that the $\mathbb{A}_{K, f}$-induced topology matches the inherited $\prod_v K_v$-subspace topology.

\begin{lean}
namespace DedekindDomain
\end{lean}

\begin{lean}
def (*\lsthref{https://github.com/smmercuri/adele-ring_locally-compact/blob/d47637a02a84cf03d6488a4780c4d0399b04278a/AdeleRingLocallyCompact/RingTheory/DedekindDomain/FinsetAdeleRing.lean\#L148-L149}{IsFinsetAdele}*) (x : ProdAdicCompletions R K) :=
  ∀ v ∉ S, x v ∈ v.adicCompletionIntegers K
\end{lean}

\begin{lean}
def (*\lsthref{https://github.com/smmercuri/adele-ring_locally-compact/blob/d47637a02a84cf03d6488a4780c4d0399b04278a/AdeleRingLocallyCompact/RingTheory/DedekindDomain/FinsetAdeleRing.lean\#L186}{FinsetAdeleRing}*) := {x : ProdAdicCompletions R K // IsFinsetAdele S x}
\end{lean}
We construct a \lstinline{Subring (ProdAdicCompletions R K)} term  \href{https://github.com/smmercuri/adele-ring_locally-compact/blob/d47637a02a84cf03d6488a4780c4d0399b04278a/AdeleRingLocallyCompact/RingTheory/DedekindDomain/FinsetAdeleRing.lean\#L197-L203}{\small\texttt{FinsetAdeleRing.subring R K}}, with carrier \lstinline[mathescape]{$\{$x | IsFinsetAdele S x$\}$}.
This is used to infer the subspace \lstinline{TopologicalRing} instance on \lstinline{FinsetAdeleRing R K}.
Defining \lstinline{FinsetAdeleRing} as a type and then constructing a \lstinline{Subring} term separately is modelled after PRs\footnote{\href{https://github.com/leanprover-community/mathlib4/pull/12386}{https://github.com/leanprover-community/mathlib4/pull/12386}}\textsuperscript{,}\footnote{\href{https://github.com/leanprover-community/mathlib4/pull/13021}{https://github.com/leanprover-community/mathlib4/pull/13021}} which converted the ring of integers of a number field and the finite adele ring to types from \lstinline{Subalgebra} and \lstinline{Subring} terms respectively. 
The former PR demonstrated significant performance improvements in \mathlib, with files seeing up to 71.9\% faster build time.

\begin{theorem}\label{thm:finiteSadele_lc} The finite $S$-adele ring $\mathbb{A}_{K, S, f}$ of a number field $K$ is locally compact.
\end{theorem}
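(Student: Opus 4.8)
The plan is to identify $\mathbb{A}_{K, S, f}$, equipped with the subspace topology it inherits from $\prod_{v\in\Sigma_{K,f}} K_v$, with a product of a finite family of locally compact spaces and an arbitrary family of compact spaces, and then to invoke the permanence properties of local compactness recalled after Definition~\ref{def:lc}. Splitting the coordinates at the finite set $S$ gives a natural identification
\[
	\mathbb{A}_{K, S, f} \;\cong\; \Big(\prod_{v\in S} K_v\Big)\times\Big(\prod_{v\notin S}\mathcal{O}_v\Big),
\]
and the first thing I would establish is that this identification is a homeomorphism, i.e. that the topology on the defining subtype agrees with the product topology on the right-hand side. Equivalently, one phrases the underlying set of $\mathbb{A}_{K, S, f}$ as a \lstinline{Set.pi} over the complement of $S$ and combines the relevant \mathlib homeomorphisms for products over a set and its complement with the homeomorphism of a \lstinline{Set.pi} with a dependent product of subtypes.

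Granting that homeomorphism, the topological conclusion is essentially immediate. The factor $\prod_{v\in S} K_v$ is a product indexed by a \lstinline{Finset}, hence a finite product, of the spaces $K_v$, each locally compact by Theorem~\ref{thm:fincompl_lc}; so it is locally compact via \lstinline{Pi.locallyCompactSpace_of_finite}, exactly as in the proof of Theorem~\ref{thm:infiniteadele_lc}. The factor $\prod_{v\notin S}\mathcal{O}_v$ is an arbitrary product of the compact spaces $\mathcal{O}_v$ of Theorem~\ref{thm:Ov_compact} (equivalently, of the \lstinline{CompactSpace} instance on \lstinline{v.adicCompletionIntegers K}), hence compact by Tychonoff's theorem and a fortiori locally compact. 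A finite product of locally compact spaces is locally compact, so the right-hand side is locally compact, and transporting along the homeomorphism yields local compactness of $\mathbb{A}_{K, S, f}$.

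The main obstacle is the homeomorphism in the first step: matching the subspace topology on the subtype \lstinline{FinsetAdeleRing R K} of $\prod_v K_v$ with the product topology on $\big(\prod_{v\in S} K_v\big)\times\big(\prod_{v\notin S}\mathcal{O}_v\big)$. The mathematical content is trivial, but assembling the requisite chain of \mathlib equivalences into a single \lstinline{Homeomorph} and discharging the \lstinline{Continuous} side goals in both directions — owing to the bookkeeping of subtypes nested inside dependent products — is where the formalisation work concentrates; the topological permanence argument above is then routine.
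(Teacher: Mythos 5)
Your proposal is correct and follows essentially the same route as the paper: decompose $\mathbb{A}_{K,S,f}$ as $\prod_{v\in S}K_v\times\prod_{v\notin S}\mathcal{O}_v$, get local compactness of the first factor as a finite product of the locally compact $K_v$ and of the second as a product of the compact $\mathcal{O}_v$, and transport along the homeomorphism built from \lstinline{Homeomorph.piEquivPiSubtypeProd} and subtype lifts. Your identification of the subtype-versus-product topology matching as the main formalisation burden is exactly where the paper's proof also concentrates its effort.
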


\begin{proof} The finite $S$-adele ring is the product
\[
	\prod_{v \in S} K_v \times \prod_{v\notin S} \mathcal{O}_v,
\]
whose first factor is locally compact as a finite product of locally compact spaces, and whose second factor is locally compact as an infinite product of compact spaces.
\end{proof}

The above proof hides some details that are trivial mathematically, but require care in formalisation. 
This comes down to two concerns: (1) how we formalise the space $\prod_{v \in S} K_v \times~\prod_{v\notin S} \mathcal{O}_v$; (2) the sense in which we identify $\mathbb{A}_{K, S, f}$ and $\prod_{v \in S} K_v \times\prod_{v\notin S} \mathcal{O}_v$.

For (1), $\prod_{v \in S} K_v \times \prod_{v\notin S} \mathcal{O}_v$ can be formalised as a type, given a product topological space instance, which we denote mathematically also by $\prod_{v \in S} K_v \times \prod_{v\notin S} \mathcal{O}_v$.
Or, it can be a subtype of $\prod_{v\in S} K_v\times \prod_{v\notin S} K_v$, given a \lstinline{Subtype} topological space instance, which we denote mathematically by $\widehat{\mathcal{O}}_S$. 
We shall use both, because the former is really the space whose local compactness is deduced in the proof of Theorem~\ref{thm:finiteSadele_lc} above, but the latter simplifies the transfer of this local compactness to $\mathbb{A}_{K, S, f}$ considerably. 
The spaces $\prod_{v\in S} K_v \times \prod_{v\notin S} K_v$, $\prod_{v \in S} K_v \times \prod_{v\notin S} \mathcal{O}_v$ and $\widehat{\mathcal{O}}_S$ are formalised respectively as follows.

\begin{lean}
def (*\lsthref{https://github.com/smmercuri/adele-ring_locally-compact/blob/d47637a02a84cf03d6488a4780c4d0399b04278a/AdeleRingLocallyCompact/RingTheory/DedekindDomain/FinsetAdeleRing.lean\#L72-L73}{ProdAdicCompletions.FinsetProd}*) := ((v : S) → v.val.adicCompletion K) × 
  ((v : {v // v ∉ S}) → v.val.adicCompletion K)
\end{lean}
  
\begin{lean}
def (*\lsthref{https://github.com/smmercuri/adele-ring_locally-compact/blob/d47637a02a84cf03d6488a4780c4d0399b04278a/AdeleRingLocallyCompact/RingTheory/DedekindDomain/FinsetAdeleRing.lean\#L89-L90}{FinsetIntegralAdeles}*) := ((v : S) → v.val.adicCompletion K) × 
  ((v : {v // v ∉ S}) → v.val.adicCompletionIntegers K)
\end{lean}

\begin{lean}
def (*\lsthref{https://github.com/smmercuri/adele-ring_locally-compact/blob/d47637a02a84cf03d6488a4780c4d0399b04278a/AdeleRingLocallyCompact/RingTheory/DedekindDomain/FinsetAdeleRing.lean\#L98}{FinsetIntegralAdeles.Subtype}*) := 
  {x : FinsetProd R K S // ∀ v, x.2 v ∈ v.val.adicCompletionIntegers K}
\end{lean}

The identification concern of (2) can then be achieved through composing two homeomorphisms $t_1$ and $t_2$, as illustrated in the commutative diagram of Figure~\ref{fig:Sadelearc}.
Here, $p : \prod_v K_v \to \text{\lstinline{Prop}}; x \mapsto \forall v\notin S, x_v \in \mathcal{O}_v$ is the subtype condition defining $\mathbb{A}_{K, S, f}$ and $q : (x_1, x_2) \mapsto \forall v\notin S, (x_2)_v \in \mathcal{O}_v$ is the subtype condition defining $\widehat{\mathcal{O}}_S$. 
The \emph{partial function} $\langle\cdot, p\rangle: \prod_v K_v\rightharpoonup\mathbb{A}_{K, S, f}$ is the function defined on each $x\in\prod_v K_v$ satisfying $p(x)$, sending it to the corresponding element in the subtype $\mathbb{A}_{K, S, f}$, and $\langle \cdot, q\rangle$ is defined analogously. 
The homeomorphism $t_{\pi}$ is given in \mathlib as \href{https://github.com/leanprover-community/mathlib4/blob/caac5b13fb72ba0c5d0b35a0067de108db65e964/Mathlib/Topology/Homeomorph.lean\#L665-L673}{\small\texttt{Homeomorph.piEquivPiSubtypeProd}}.
\begin{figure}[h]
\centering
\begin{tikzpicture}[>=Stealth]
    	\node (AKS) at (0,0) {\( \mathbb{A}_{K, S, f}\)};
    	\node (K_hat) at (0,3) {\(\displaystyle\prod_v K_v\)};
	\node (K_hat_prod) at (4, 3) {\(\displaystyle\prod_{v\in S} K_v \times \prod_{v\notin S}K_v\)};
    	\node (K_hat_S_st) at (4,0) {\(\widehat{\mathcal{O}}_S\)};
	\node (K_hat_S) at (8, 3) {\(\displaystyle\prod_{v\in S}K_v \times\prod_{v\notin S}\mathcal{O}_v\)};

    	\draw[left to-] (AKS) -- (K_hat) node[midway, left] {$\langle\cdot, p\rangle$};
	\draw[<->] (K_hat) -- (K_hat_prod) node[midway, above] {$t_{\pi}$} node[midway, below] {$\cong$};
	\draw[-left to] (K_hat_prod) -- (K_hat_S_st) node[midway, right] {$\langle\cdot, q\rangle$};
	\draw[dotted, <->] (AKS) edge node[midway, below] {$t_1$}  (K_hat_S_st);
	\draw[dotted, <->] (K_hat_S_st) -- (K_hat_S) node[midway, below right]{$t_2$};
\end{tikzpicture}
\caption{\label{fig:Sadelearc} Type relationships in the formal proof for the local compactness of $\mathbb{A}_{K, S, f}$.}
\end{figure}

Given the formalisations of the local compactness of $K_v$ and the compactness of $\mathcal{O}_v$ in Section~\ref{sec:compactO}, the local compactness of $\prod_v \in K_v \times \prod_{v\notin S} \mathcal{O}_v$ can be easily inferred.
\begin{lean}
instance (*\lsthref{https://github.com/smmercuri/adele-ring_locally-compact/blob/d47637a02a84cf03d6488a4780c4d0399b04278a/AdeleRingLocallyCompact/RingTheory/DedekindDomain/FinsetAdeleRing.lean\#L137-L138}{FinsetIntegralAdeles.locallyCompactSpace}*) : 
    LocallyCompactSpace (FinsetIntegralAdeles R K S) :=
  Prod.locallyCompactSpace _ _
\end{lean}
The homeomorphism $t_2$ is defined by \lstinline{Prod.mk} and \lstinline{Subtype.mk} maps, whose continuity is straightforward to show. 
This then gives the local compactness of $\widehat{\mathcal{O}}_S$.
\begin{lean}[escapeinside={(*}{*)}]
def (*\lsthref{https://github.com/smmercuri/adele-ring_locally-compact/blob/d47637a02a84cf03d6488a4780c4d0399b04278a/AdeleRingLocallyCompact/RingTheory/DedekindDomain/FinsetAdeleRing.lean\#L127-L131}{FinsetIntegralAdeles.subtypeHomeomorph}*) :
    Subtype R K S ≃(*$_t$*) FinsetIntegralAdeles R K S where ...

instance (*\lsthref{https://github.com/smmercuri/adele-ring_locally-compact/blob/d47637a02a84cf03d6488a4780c4d0399b04278a/AdeleRingLocallyCompact/RingTheory/DedekindDomain/FinsetAdeleRing.lean\#L141-L142}{FinsetIntegralAdeles.locallyCompactSpaceSubtype}*) : 
    LocallyCompactSpace (Subtype R K S) :=
  (subtypeHomeomorph R K S).closedEmbedding.locallyCompactSpace
\end{lean}
It is easy to see that $p = q\circ t_{\pi}$ on defined domains, so the homeomorphism $t_1$ can be lifted from $t_{\pi}$ using \href{https://github.com/leanprover-community/mathlib4/blob/caac5b13fb72ba0c5d0b35a0067de108db65e964/Mathlib/Topology/Homeomorph.lean\#L458-L463}{\small\texttt{Homeomorph.subtype}} and used to infer that $\mathbb{A}_{K, S, f}$ is locally compact. 

\begin{lean}[escapeinside={(*}{*)}]
namespace FinsetAdeleRing
\end{lean}

\begin{lean}[escapeinside={(*}{*)}]
def (*\lsthref{https://github.com/smmercuri/adele-ring_locally-compact/blob/d47637a02a84cf03d6488a4780c4d0399b04278a/AdeleRingLocallyCompact/RingTheory/DedekindDomain/FinsetAdeleRing.lean\#L218-L221}{homeomorphSubtype}*) : FinsetAdeleRing R K S ≃(*$_t$*) FinsetIntegralAdeles.Subtype R K S :=
  (Homeomorph.piEquivPiSubtypeProd _ _).subtype <| fun _ =>
    ⟨fun hx v => hx v.1 v.2, fun hx v hv => hx ⟨v, hv⟩⟩
\end{lean}

\begin{lean}[escapeinside={(*}{*)}]
instance (*\lsthref{https://github.com/smmercuri/adele-ring_locally-compact/blob/d47637a02a84cf03d6488a4780c4d0399b04278a/AdeleRingLocallyCompact/RingTheory/DedekindDomain/FinsetAdeleRing.lean\#L224-L225}{locallyCompactSpace}*) : LocallyCompactSpace (FinsetAdeleRing R K S) :=
  (homeomorphSubtype R K S).closedEmbedding.locallyCompactSpace
\end{lean}

\subsubsection{Using the finite $S$-adele rings to cover the finite adele ring}
\label{sec:finiteSadele_cover}

\begin{theorem}\label{thm:finiteadele_lc} The finite adele ring $\mathbb{A}_{K, f}$ of a number field $K$ is locally compact.
\end{theorem}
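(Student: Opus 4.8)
The plan is to cover $\mathbb{A}_{K, f}$ by the open subspaces $\mathbb{A}_{K, S, f}$, with $S \subseteq \Sigma_{K, f}$ ranging over finite sets, and to exploit that local compactness is a \emph{local} property. Since $\mathbb{A}_{K, f}$ is Hausdorff (it sits inside $\prod_v K_v$, which is Hausdorff), it is locally compact as soon as every point has \emph{some} compact neighbourhood, and such a neighbourhood can be manufactured inside whichever $\mathbb{A}_{K, S, f}$ contains the point — \emph{provided} that $\mathbb{A}_{K, S, f}$ sits in $\mathbb{A}_{K, f}$ as an \emph{open} subspace. The covering step is immediate: each finite adele $x$ has finite support $S_x := \{v \mid x_v \notin \mathcal{O}_v\}$, so $x$ lies in the image of $\mathbb{A}_{K, S, f} \hookrightarrow \mathbb{A}_{K, f}$ for every finite $S \supseteq S_x$.

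The heart of the argument is therefore to show that for each finite $S$ the inclusion $j_S$ of $\mathbb{A}_{K, S, f}$ — carrying the $\prod_v K_v$-subspace topology used throughout Section~\ref{sec:finiteSadele} — into $\mathbb{A}_{K, f}$ is an open topological embedding. This has two components. First, the image $\{x \in \mathbb{A}_{K, f} \mid x_v \in \mathcal{O}_v \text{ for all } v \notin S\}$ is open: for $x$ in the image, a basic neighbourhood of the form $x + \bigl(\prod_{v \in S} B_{r_v}(0) \times \prod_{v \notin S} \mathcal{O}_v\bigr)$ (Remark~\ref{rmk:nhds}) stays inside the image, because $x_v \in \mathcal{O}_v$ forces $x_v + \mathcal{O}_v = \mathcal{O}_v$ at every $v \notin S$. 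Second, the subspace topology that $\mathbb{A}_{K, f}$ induces on this image agrees with the $\prod_v K_v$-subspace topology on $\mathbb{A}_{K, S, f}$ — the crucial comparison flagged in Section~\ref{sec:finiteSadele}. One inclusion of topologies: a basic open $\prod_{v \in T} U_v \times \prod_{v \notin T} \mathcal{O}_v$ of $\mathbb{A}_{K, f}$, with $T$ finite, meets the image in a set cut out by open conditions at only the finitely many places of $S \cup T$, hence open for the product topology. The reverse: a basic product-topology open of $\mathbb{A}_{K, S, f}$, given by conditions $x_v \in U_v$ at finitely many $v \in T$, has image equal to a basic open of $\mathbb{A}_{K, f}$ with exceptional set $S \cup T$, namely $U_v \cap \mathcal{O}_v$ at $v \in T \setminus S$, $U_v$ at $v \in T \cap S$, and $\mathcal{O}_v$ at $v \in S \setminus T$ and off $S \cup T$. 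In Lean this would be packaged as an \texttt{OpenEmbedding} assembled from the neighbourhood-filter description of $\mathbb{A}_{K, f}$ together with the homeomorphisms of Figure~\ref{fig:Sadelearc}; I note that the $S = \emptyset$ case already exhibits $\prod_v \mathcal{O}_v$ as a \emph{compact open subgroup} of $\mathbb{A}_{K, f}$, giving an even shorter route, since a topological group with a compact open subgroup is locally compact.

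Granting the open embedding, the image $U_S$ of $j_S$ is an open subspace of $\mathbb{A}_{K, f}$ homeomorphic to $\mathbb{A}_{K, S, f}$, which is locally compact by Theorem~\ref{thm:finiteSadele_lc}; hence $U_S$ is open and locally compact. Given $x \in \mathbb{A}_{K, f}$, choose a finite $S$ with $x \in U_S$ (e.g. $S = S_x$); local compactness of $U_S$ supplies a compact neighbourhood $C$ of $x$ in $U_S$, and since $U_S$ is open in $\mathbb{A}_{K, f}$ this $C$ is also a neighbourhood of $x$ in $\mathbb{A}_{K, f}$, while remaining compact. Thus every point of the Hausdorff space $\mathbb{A}_{K, f}$ has a compact neighbourhood, which is equivalent to local compactness in the sense of Definition~\ref{def:lc}. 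In the formalisation this final passage is an invocation of the lemma that, in a Hausdorff space, the existence of a compact neighbourhood at every point implies local compactness.

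The main obstacle is the topology comparison underlying the open embedding $j_S$: the finite adele topology on $\mathbb{A}_{K, f}$ is strictly finer than the topology inherited from $\prod_v K_v$, and the two coincide only after restriction to $\mathbb{A}_{K, S, f}$. Establishing this forces one to unwind the \texttt{SubmodulesRingBasis} description of the neighbourhood filter of $0$ in $\mathbb{A}_{K, f}$ (Remark~\ref{rmk:nhds}) and match it, place by place across the four regimes $v \in S \cap T$, $v \in T \setminus S$, $v \in S \setminus T$ and $v \notin S \cup T$, against products of finitely many open constraints — exactly the bookkeeping that the clean statement of Theorem~\ref{thm:finiteSadele_lc} suppresses.
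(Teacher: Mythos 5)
Your proposal is correct and follows essentially the same route as the paper: cover $\mathbb{A}_{K,f}$ by the finite $S$-adele rings, establish that each inclusion is an open, topology-inducing embedding (the topology-coincidence step the paper formalises as \texttt{algebraMap\_inducing}), and push forward compact neighbourhoods obtained from Theorem~\ref{thm:finiteSadele_lc}; your $\prod_v\mathcal{O}_v$ compact-open-subgroup shortcut is likewise noted by the paper as an alternative proof. The only cosmetic difference is at the end: rather than invoking Hausdorffness to upgrade ``a compact neighbourhood at every point'' to Definition~\ref{def:lc}, the paper intersects the given neighbourhood $N$ with $\mathbb{A}_{K,S_x,f}$ and extracts a compact neighbourhood inside $N$ directly, so no separation axiom is needed.
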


\begin{proof}
Let $x \in \mathbb{A}_{K, f}$ and $N$ be a neighbourhood of $x$. 
We have $x \in\mathbb{A}_{K, S_x, f}$, where $S_x$ is the set of finite places for which $x \notin \mathcal{O}_v$. 
By the definition of the topology on the finite adele ring, $\mathbb{A}_{K, S_x, f}$ is a neighbourhood of $x$.
Moreover, it is locally compact by Theorem~\ref{thm:finiteSadele_lc}. 
The neighbourhood $N\cap\mathbb{A}_{K, S_x, f}$ of $x$ inside the finite $S$-adele ring therefore contains a compact neighbourhood $M$ of $x$. 
The inclusion map $\mathbb{A}_{K, S_x, f}\hookrightarrow \mathbb{A}_{K, f}$ sends neighbourhoods to neighbourhoods and is continuous open, so the image of $M$ into $\mathbb{A}_{K, f}$ is a compact neighbourhood of $x$ in $\mathbb{A}_{K, f}$.
\end{proof}

\begin{remark}
The proof in this section shows local compactness by direct satisfaction of Definition~\ref{def:lc}. 
In our code, we also provide a \href{https://github.com/smmercuri/adele-ring_locally-compact/blob/d47637a02a84cf03d6488a4780c4d0399b04278a/AdeleRingLocallyCompact/RingTheory/DedekindDomain/FiniteAdeleRingAlt.lean}{simplified proof} which takes advantage of the fact that, in a topological ring, we need only show that $0$ has a compact neighbourhood, which is given by $\prod_v \mathcal{O}_v$ in this case.
\end{remark}

The subtlety here is that, in Theorem~\ref{thm:finiteSadele_lc}, we showed that the finite $S$-adele ring with the subspace topology of $\prod_v K_v$ is locally compact, whereas in the above proof, we are viewing it as a subspace of $\mathbb{A}_{K, f}$. 
The key point is that these topologies coincide.
The $\mathbb{A}_{K, f}$-subspace topology is defined by inducing through the embedding $\iota(S) : \mathbb{A}_{K, S, f} \hookrightarrow \mathbb{A}_{K, f}$.
\begin{lean}
instance : Algebra (FinsetAdeleRing R K S) (FiniteAdeleRing R K) where ...

local notation "ι(" S ")" => algebraMap (FinsetAdeleRing R K S) (FiniteAdeleRing R K)
  
theorem (*\lsthref{https://github.com/smmercuri/adele-ring_locally-compact/blob/d47637a02a84cf03d6488a4780c4d0399b04278a/AdeleRingLocallyCompact/RingTheory/DedekindDomain/FinsetAdeleRing.lean\#L260-L264}{algebraMap\_injective}*) : Function.Injective ι(S) := ...
\end{lean}
As constructed, \lstinline{FinsetAdeleRing R K S} has the \lstinline{ProdAdicCompletions R K} subspace topology.
In \mathlib, the property that the topology induced through a function coincides with a space's given topology is encoded in the \href{https://github.com/leanprover-community/mathlib4/blob/caac5b13fb72ba0c5d0b35a0067de108db65e964/Mathlib/Topology/Defs/Induced.lean\#L101-L103}{\small\texttt{Inducing}} structure. 
The map $\iota(S)$ is \lstinline{Inducing} if and only if the images of neighbourhoods in $\mathbb{A}_{K, S, f}$ and preimages of neighbourhoods in $\mathbb{A}_{K, f}$ remain neighbourhoods under $\iota(S)$.
\begin{lean}
theorem (*\lsthref{https://github.com/smmercuri/adele-ring_locally-compact/blob/d47637a02a84cf03d6488a4780c4d0399b04278a/AdeleRingLocallyCompact/RingTheory/DedekindDomain/FinsetAdeleRing.lean\#L302-L329}{algebraMap\_image\_mem\_nhds}*) (x : FinsetAdeleRing R K S)
    {U : Set (FinsetAdeleRing R K S)} (h : U ∈ nhds x) :
    ι(S) '' U ∈ nhds (ι(S) x) := by ... 
\end{lean}
    
\begin{lean}
theorem (*\lsthref{https://github.com/smmercuri/adele-ring_locally-compact/blob/d47637a02a84cf03d6488a4780c4d0399b04278a/AdeleRingLocallyCompact/RingTheory/DedekindDomain/FinsetAdeleRing.lean\#L334-L347}{mem\_nhds\_comap\_algebraMap}*) (x : FinsetAdeleRing R K S)
    {U : Set (FinsetAdeleRing R K S)} (h : U ∈ Filter.comap ι(S) (nhds (ι(S) x))) :
    U ∈ nhds x := by ...
\end{lean}
    
\begin{lean}
theorem (*\lsthref{https://github.com/smmercuri/adele-ring_locally-compact/blob/d47637a02a84cf03d6488a4780c4d0399b04278a/AdeleRingLocallyCompact/RingTheory/DedekindDomain/FinsetAdeleRing.lean\#L353-L356}{algebraMap\_inducing}*) : Inducing ι(S) := by
  refine inducing_iff_nhds.2 
    (fun x => Filter.ext (fun U => ⟨fun hU => ⟨ι(S) '' U,  ?_⟩, 
      mem_nhds_comap_algebraMap x⟩))
  exact ⟨algebraMap_image_mem_nhds x hU, 
    by rw [(algebraMap_injective R K S).preimage_image]⟩
\end{lean}
Maps that are \lstinline{Inducing} are also open and continuous, so we can use $\iota(S)$ to push forward compact neighbourhoods from the finite $S$-adele ring to the finite adele ring as in the informal proof of Theorem~\ref{thm:finiteadele_lc}.
The formal proof may then be given as follows.

\newpage
\begin{lean}
instance (*\lsthref{https://github.com/smmercuri/adele-ring_locally-compact/blob/d47637a02a84cf03d6488a4780c4d0399b04278a/AdeleRingLocallyCompact/RingTheory/DedekindDomain/FinsetAdeleRing.lean\#L375-L383}{FiniteAdeleRing.locallyCompactSpace}*) : 
    LocallyCompactSpace (FiniteAdeleRing R K) := by
  refine LocallyCompactSpace.mk <| fun x N hN => let S := support x; ?_
  have h := (algebraMap_inducing R K S).nhds_eq_comap (ofFiniteAdeleSupport x)
  let ⟨M, hM⟩ := (FinsetAdeleRing.locallyCompactSpace R K S).local_compact_nhds
    (ofFiniteAdeleSupport x) _ (h ▸ Filter.preimage_mem_comap hN)
  refine ⟨ι(S) '' M, ?_, Set.image_subset_iff.2 hM.2.1,
    (algebraMap_inducing R K S).isCompact_iff.1 hM.2.2⟩
  have h := algebraMap_range_mem_nhds (ofFiniteAdeleSupport x)
  exact (algebraMap_inducing R K S).map_nhds_of_mem _ h ▸ Filter.image_mem_map hM.1

end DedekindDomain
\end{lean}

\subsection{Local compactness of the adele ring}

\begin{theorem}\label{thm:adelering_lc} The adele ring $\mathbb{A}_K$ of a number field $K$ is locally compact.
\end{theorem}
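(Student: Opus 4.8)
The plan is to invoke the two local compactness results already established and combine them via the stability of local compactness under finite products. By Definition~\ref{def:adelering}, the adele ring is the product topological ring $\mathbb{A}_K = \mathbb{A}_{K,\infty} \times \mathbb{A}_{K,f}$, carrying the product topology. Theorem~\ref{thm:infiniteadele_lc} gives that the infinite adele ring $\mathbb{A}_{K,\infty}$ is locally compact, and Theorem~\ref{thm:finiteadele_lc} gives that the finite adele ring $\mathbb{A}_{K,f}$ is locally compact. Since a finite product of locally compact spaces is locally compact (noted after Definition~\ref{def:lc}), the conclusion follows immediately.

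Concretely, I would first ensure that the two instances \lstinline{NumberField.InfiniteAdeleRing.locallyCompactSpace} and \lstinline{FiniteAdeleRing.locallyCompactSpace} are in scope for the type class inference system, then let \lstinline{Prod.locallyCompactSpace} assemble them: given a neighbourhood $N$ of $(x_\infty, x_f) \in \mathbb{A}_K$, one shrinks $N$ to a basic product neighbourhood $N_\infty \times N_f$, picks compact neighbourhoods $M_\infty \subseteq N_\infty$ of $x_\infty$ and $M_f \subseteq N_f$ of $x_f$, and takes $M_\infty \times M_f$, which is a compact neighbourhood of $(x_\infty, x_f)$ contained in $N$ by the Tychonoff theorem for two factors and the definition of the product topology.

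There is essentially no obstacle at this stage: the substance of the argument is entirely carried by Theorems~\ref{thm:infiniteadele_lc} and~\ref{thm:finiteadele_lc}, whose proofs in turn rest on the compactness of $\mathcal{O}_v$ (Theorem~\ref{thm:Ov_compact}) and the $S$-adele covering argument of Section~\ref{sec:finiteSadele_cover}. The only thing to watch in the formalisation is that the \lstinline{TopologicalSpace} instance on \lstinline{NumberField.AdeleRing K} really is the product instance \lstinline{Prod.topologicalSpace} coming from the two factor topologies, so that \lstinline{Prod.locallyCompactSpace} applies definitionally without a detour through \lstinline{Homeomorph} or instance-diamond unfolding; this is immediate from the definition of \lstinline{NumberField.AdeleRing} as a product type.
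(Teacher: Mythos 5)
Your proposal is correct and matches the paper's proof exactly: both cite Theorems~\ref{thm:infiniteadele_lc} and~\ref{thm:finiteadele_lc} and combine them via the stability of local compactness under finite products, formalised as \lstinline{Prod.locallyCompactSpace _ _}. The additional detail you give on assembling compact product neighbourhoods is just an unwinding of that standard lemma and does not change the argument.
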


\begin{proof} As the product of the infinite adele ring and the finite adele ring, which are each locally compact (Theorems~\ref{thm:infiniteadele_lc} and \ref{thm:finiteadele_lc}, respectively), the adele ring is also locally compact.
\end{proof}

\begin{lean}
instance (*\lsthref{https://github.com/smmercuri/adele-ring_locally-compact/blob/d47637a02a84cf03d6488a4780c4d0399b04278a/AdeleRingLocallyCompact/NumberTheory/NumberField/AdeleRing.lean\#L70-L71}{NumberField.AdeleRing.locallyCompactSpace}*) (K : Type*) [Field K] 
    [NumberField K] : LocallyCompactSpace (AdeleRing K) := 
  Prod.locallyCompactSpace _ _
\end{lean}
\section{Discussion}
\label{sec:discussion}

\subsection{A comparison of approaches for handling multiple instances}
\label{sec:multiple_instances}
There are many cases in mathematics where we would like multiple distinct instances of the same class on a type.
An example appeared in this research, where we had distinct uniform structures coming from the infinite places of a number field. 
There are a number of methods for assigning such instances in Lean, where the type class inference system expects only a single instance of a particular class to be assigned to a type.
The approach we took to resolve this issue in Section~\ref{sec:withabs} was to define the type synonym \lstinline{WithAbs}.
We will compare the type synonym approach with two other approaches in this section. 

Throughout Section~\ref{sec:multiple_instances} the following variables are in scope.
\begin{lean}
variable {K : Type*} [Field K] (v : AbsoluteValue K ℝ) 
\end{lean}

\subsubsection{Type synonyms}
\label{sec:discussion_synonym}

The basic usage of a type synonym is to create a copy of an already existing type.
This enables one to assign an additional instance of a class to that type.
The archetype of this in \mathlib is \href{https://github.com/leanprover-community/mathlib4/blob/caac5b13fb72ba0c5d0b35a0067de108db65e964/Mathlib/Order/Synonym.lean\#L134-L135}{\small\texttt{Lex}}, which is defined simply as \lstinline{Lex α := α} and is used to give a type its lexicographic order.

It is possible to assign a family of instances by defining dependent type synonyms. 
This was the approach we took for \lstinline{WithAbs}, which redefined a semiring as one that is dependent on absolute values.
This approach creates separate indexed types, so that a single instance from the family is assigned to each, and type class search resolves automatically.
In our application, we required \lstinline{UniformSpace} instances coming from absolute values of a field so that we could apply \lstinline{UniformSpace.Completion} as follows.
\begin{lean}
instance (*\lsthref{https://github.com/smmercuri/adele-ring_locally-compact/blob/d47637a02a84cf03d6488a4780c4d0399b04278a/AdeleRingLocallyCompact/NumberTheory/NumberField/Completion.lean\#L97-L100}{WithAbs.normedField}*) : NormedField (WithAbs v) where ...

abbrev (*\lsthref{https://github.com/smmercuri/adele-ring_locally-compact/blob/d47637a02a84cf03d6488a4780c4d0399b04278a/AdeleRingLocallyCompact/NumberTheory/NumberField/Completion.lean\#L139}{AbsoluteValue.Completion}*) := UniformSpace.Completion (WithAbs v)
\end{lean}
Notice that we do not need to tell the inference system which \lstinline{UniformSpace} instance we really mean.
In addition, type class search resolves automatically in dependent results. 
For example, we can state \lstinline{CompletableTopField K} and apply \lstinline{UniformSpace.Completion.instField} in the following without an explicit \lstinline{UniformSpace} instance.
\begin{lean}
instance [CompletableTopField (WithAbs v)] : Field v.Completion :=
  UniformSpace.Completion.instField
\end{lean}

\subsubsection{Dependent constructors to a type class}

Another approach to assigning multiple instances is to define an alternate constructor to the class we want to assign.
In the case where we have an indexed family of instances, then the alternate constructor should depend on this index.
A previous version of our work used this approach in order to specify instances coming from absolute values. 
In this case we define a constructor to the \lstinline{NormedField} class from an absolute value.
\begin{lean}
instance (*\lsthref{https://github.com/smmercuri/adele-ring_locally-compact/blob/d47637a02a84cf03d6488a4780c4d0399b04278a/AdeleRingLocallyCompact/NumberTheory/NumberField/CompletionAlt.lean\#L73-L82}{AbsoluteValue.normedFieldCons}*) : NormedField K where
  norm := v
  ...
\end{lean}
The required \lstinline{UniformSpace} instance on \lstinline{K} associated to \lstinline{v} can automatically be inferred from \lstinline{v.normedFieldCons}.
However, type class search cannot automatically find \lstinline{v.normedFieldCons}, even when it is made an instance, due to the expectation that \lstinline{K} have only a single \lstinline{NormedField} instance.
This approach has the drawback that we must provide the \lstinline{v.normedFieldCons} instance whenever required, as follows.
\begin{lean}
abbrev (*\lsthref{https://github.com/smmercuri/adele-ring_locally-compact/blob/d47637a02a84cf03d6488a4780c4d0399b04278a/AdeleRingLocallyCompact/NumberTheory/NumberField/CompletionAlt.lean\#L84-L86}{AbsoluteValue.CompletionCons}*) := 
  letI := v.normedFieldCons -- Explicit instance required
  UniformSpace.Completion K

instance [letI := v.normedFieldCons; CompletableTopField K] :
    Field v.CompletionCons :=
  letI := v.normedFieldCons -- Explicit instance required
  UniformSpace.Completion.instField
\end{lean}

\subsubsection{Data-carrying type class}

A third approach to assigning an indexed family of instances is to define a new class altogether, which equips the type with a choice of index.
This is similar to the type synonym approach but the data is stored within a class now, as opposed to a new type.

\begin{lean}
class (*\lsthref{https://github.com/smmercuri/adele-ring_locally-compact/blob/d47637a02a84cf03d6488a4780c4d0399b04278a/AdeleRingLocallyCompact/NumberTheory/NumberField/CompletionAlt.lean\#L121-L122}{WithAbsReal}*) (R : Type*) [Semiring R] where
  v : AbsoluteValue R ℝ
\end{lean}

\begin{lean}
instance [WithAbsReal K] : NormedField K := WithAbsReal.v.normedFieldCons
\end{lean}
Then we make an explicit instance of this type class within the completion operation.
\begin{lean}
abbrev (*\lsthref{https://github.com/smmercuri/adele-ring_locally-compact/blob/d47637a02a84cf03d6488a4780c4d0399b04278a/AdeleRingLocallyCompact/NumberTheory/NumberField/CompletionAlt.lean\#L130-L132}{AbsoluteValue.CompletionClass}*) :=
  letI := WithAbsReal.mk v -- Explicit instance required
  UniformSpace.Completion K
\end{lean}
As in the dependent constructor approach, any instance of a parent class of \lstinline{NormedField} needs to be explicitly provided whenever required.
\begin{lean}
instance [letI := WithAbsReal.mk v; CompletableTopField K] :
    Field v.CompletionClass :=
  letI := WithAbsReal.mk v -- Explicit instance required
  UniformSpace.Completion.instField
\end{lean}

Out of the three approaches described, the type synonym approach optimises type class resolution.
This helps to ensure robust further usage, development and maintenance.

\subsection{Completing a number field at an infinite place through subfields}
\label{sec:subfield_completion}

In Section~\ref{sec:infinitecompl}, we described the formalisation of the completion of a number field at an infinite place.
The approach there hinged on the properties of the uniform space given by $\sigma$-induced absolute values, where $\sigma : K \hookrightarrow \mathbb{C}$.
In this section, we describe an alternate approach where we inject $K$ onto its image first, to obtain a \lstinline{Subfield ℂ} term, and then apply \lstinline{UniformSpace.Completion} to the subfield term.

\begin{lean}
variable {K : Type*} [Field K] (v : InfinitePlace K)
\end{lean}

\begin{lean}
def (*\lsthref{https://github.com/smmercuri/adele-ring_locally-compact/blob/d47637a02a84cf03d6488a4780c4d0399b04278a/AdeleRingLocallyCompact/NumberTheory/NumberField/CompletionAlt.lean\#L175-L177}{NumberField.InfinitePlace.subfield}*) : Subfield ℂ where
  toSubring := v.embedding.range ... 
\end{lean}
  
\begin{lean}
def (*\lsthref{https://github.com/smmercuri/adele-ring_locally-compact/blob/d47637a02a84cf03d6488a4780c4d0399b04278a/AdeleRingLocallyCompact/NumberTheory/NumberField/CompletionAlt.lean\#L180-L185}{toSubfield}*) : K →+* v.subfield where ...
\end{lean}

\begin{lean}
def (*\lsthref{https://github.com/smmercuri/adele-ring_locally-compact/blob/d47637a02a84cf03d6488a4780c4d0399b04278a/AdeleRingLocallyCompact/NumberTheory/NumberField/CompletionAlt.lean\#L195}{NumberField.InfinitePlace.CompletionSubfield}*) := 
  UniformSpace.Completion v.subfield
\end{lean}

\begin{lean}
instance : Field v.CompletionSubfield := inferInstance
\end{lean}

By injecting $K$ to its image, we remove the issues around multiple \lstinline{UniformSpace} instances that we described in Section~\ref{sec:multiple_instances}.
As a result, the type class inference system is immediately able to infer necessary instances such as \lstinline{UniformSpace} and \lstinline{Field}.

The subfield completion defined by \lstinline{v.CompletionSubfield} is isomorphic to \lstinline{v.Completion} of Section~\ref{sec:infinitecompl} as uniform spaces by Theorem~\ref{thm:abstractcompl_iso}, because their constructions both define abstract completions of $K$.
The canonical norm on \lstinline{v.CompletionSubfield} is the \emph{complex} absolute value, but any $x \in K$ first embeds into \lstinline{v.subfield} via $\sigma$ before being coerced to \lstinline{v.CompletionSubfield}, 
so $x \in K$ still has the value $|\sigma(x)|_{\mathbb{C}}$. 
\begin{lean}
instance : Coe K v.CompletionSubfield where
  coe := (UniformSpace.Completion.coe' v.subfield) ∘ v.toSubfield
\end{lean}

However, this approach has a drawback.
Any uniform completion $(Y, \iota : K \to Y)$ of $K$ comes with the property that, for any uniformly continuous homomorphism $f : K \to~Z$ from $K$ to a complete separated uniform field $Z$, there exists a uniformly continuous homomorphism $\hat{f}:~Y \to Z$ such that $\hat{f}(\iota(x)) = f(x)$ for all $x \in K$. 
This map is given by \href{https://github.com/leanprover-community/mathlib4/blob/caac5b13fb72ba0c5d0b35a0067de108db65e964/Mathlib/Topology/Algebra/UniformRing.lean\#L148-L168}{\small\texttt{UniformSpace.Completion.extensionHom}} in \mathlib.
In this alternate construction, however, we can only extend functions from \lstinline{v.subfield} to \lstinline{v.CompletionSubfield}, and not those from $K$ without further work.
Generally, in bypassing the use of \lstinline{UniformSpace.Completion} directly on $K$, we do not have immediate access to parts of the \lstinline{UniformSpace.Completion} API that transfer properties and functions of $K$ to its completion.
This makes it more difficult to maintain a useful API for \lstinline{v.CompletionSubfield}.

\subsection{Formalising the infinite adele ring as a mixed space}
\label{sec:mixedSpace}

An infinite place $v \in \Sigma_{K, \infty}$ is \emph{real} if the image of its associated embedding $K \hookrightarrow \mathbb{C}$ lies entirely within  $\mathbb{R}$, in which case $K_v = \mathbb{R}$, otherwise it is \emph{complex} and $K_v = \mathbb{C}$.
Therefore the infinite adele ring can also be viewed as $\mathbb{R}^{r_1} \times \mathbb{C}^{r_2}$, where $r_1$ and $r_2$ are the number of real and complex places respectively.
This is the space with local notation \href{https://github.com/leanprover-community/mathlib4/blob/caac5b13fb72ba0c5d0b35a0067de108db65e964/Mathlib/NumberTheory/NumberField/CanonicalEmbedding/Basic.lean\#L184-L185}{\small\texttt{E K}} in \mathlib (redefined as \lstinline{NumberField.mixedEmbedding.mixedSpace} in later versions).

We chose not to use \lstinline{mixedSpace} as the formalisation of the infinite adele ring for a number of reasons.
For one, it would mean the adele ring would be defined as a triple product over the two subtypes of \lstinline{InfinitePlace} and the type of prime ideals on $\mathcal{O}_K$, which is more cumbersome.
One could also define the completion of \lstinline{K} at \lstinline{v : InfinitePlace K} as \lstinline{if v.IsReal then ℝ else ℂ}, but this is neither definitionally equal to \lstinline{ℝ} nor \lstinline{ℂ}.
Secondly, as discussed in Section~\ref{sec:subfield_completion}, the \lstinline{UniformSpace.Completion} API comes with useful constructions such as extensions of homomorphisms that allow us to transfer certain functions on $K$ to its completion.
We would not have immediate access to these with the \lstinline{mixedSpace} formalisation.

In our work, we show isometric ring isomorphisms between $K_v$ and $\mathbb{R}$ or $\mathbb{C}$ as appropriate, as well as an isomorphism between the infinite adele ring and the mixed space. 

\begin{lean}
def (*\lsthref{https://github.com/smmercuri/adele-ring_locally-compact/blob/d47637a02a84cf03d6488a4780c4d0399b04278a/AdeleRingLocallyCompact/NumberTheory/NumberField/InfiniteAdeleRing.lean\#L84-L96}{InfiniteAdeleRing.ringEquivMixedSpace}*) (K : Type*) [Field K] :
    InfiniteAdeleRing K ≃+* ({w : InfinitePlace K // IsReal w} → ℝ) × 
      ({w : InfinitePlace K // IsComplex w} → ℂ) := ...
\end{lean}

\subsection{Future work}

This work establishes a \href{https://github.com/ImperialCollegeLondon/FLT/blob/c982409a2c783dde82f6c871102b6500d992b96d/FLT/NumberField/AdeleRing.lean\#L29-L38}{key result} in the early stages of the five-year effort to formalise Fermat's Last Theorem. 
It also represents a step towards the formalisation of Tate's thesis in \mathlib, where the adele ring was used to establish the foundations of the Langlands program for $GL(1)$. 
Other theoretical ingredients for Tate's thesis, such as Haar measures \cite{vandoorn21}, are now also available in \mathlib.
We list some ordered future work to follow directly on from this research.
\begin{itemize}
	\item Show that the adele ring of a function field is locally compact.
	\item Show that $K$ is a discrete cocompact subgroup of $\mathbb{A}_K$.
	\item Prove that the idele group $\mathbb{I}_K$ is locally compact and $K^{\times}$ is a discrete cocompact subgroup of $\mathbb{I}_K$.
	\item Formalise adelic Hecke characters.
	\item Formalise Tate's thesis.
\end{itemize}

\section*{Acknowledgments}

I would like to thank Mar\'{i}a In\'{e}s de Frutos-Fern\'{a}ndez for their advice on formalising the infinite adele ring of a number field, and the \mathlib community for their constructive feedback on the code. I would also like to thank the anonymous reviewers for their helpful suggestions that have improved this paper.


\printbibliography

\end{document}